\documentclass[10pt,conference]{IEEEtran}

\usepackage{cite}
\usepackage{amsmath, amsthm, amsfonts}
\usepackage{amssymb}
\usepackage{hyperref}
\usepackage{color}
\usepackage{algorithm}
\usepackage{algpseudocode}
\usepackage{stmaryrd}
\usepackage{centernot}
\usepackage{graphicx}

\begin{document}

\title{Geometric Controller Synthesis for Weighted Event Signal
  Temporal Logic}

\author{\IEEEauthorblockN{Avinash Malik}
  \IEEEauthorblockA{\textit{Department of Electrical, Computer, and
      Software Engineering} \\
    \textit{University of Auckland}\\
    Auckland, New Zealand \\
    avinash.malik@auckland.ac.nz} }


\newtheorem{lemma}{Lemma}
\newtheorem{theorem}{Theorem}
\newtheorem{proposition}{Proposition}
\newtheorem{corollary}{Corollary}

\maketitle

\begin{abstract}
  Cyber-Physical Systems (CPS) controllers synthesized from standard
  temporal logics rely on rigid global clocks, rendering them vulnerable
  to asynchronous timing anomalies like clock snaps, jitter, and network
  delays. To overcome these vulnerabilities, we introduce a
  fundamentally timeless geometric control paradigm alongside a novel
  specification language: Weighted Event-Based Signal Temporal Logic
  (weSTL+). This logic combines the asynchronous, event-triggered
  mechanics of event-STL with the quantitative preference scaling of
  weighted-STL, making it highly suitable for practical autonomous CPS.
  Crucially, our framework compiles global weSTL+ specifications into a
  discrete, time-independent Hybrid Automaton (HA). Within this HA,
  operational modes transition strictly upon the crossing of continuous
  geometric state boundaries rather than rigid temporal triggers. By
  leveraging finite-time level-set inversion to map temporal windows
  into purely spatial surrogate constraints routed to a state-dependent
  quadratic program, this approach entirely eliminates explicit runtime
  clock monitoring. Autonomous robotics based experimental results
  demonstrate that the proposed geometric architecture guarantees the
  enforcement of safety and liveness under severe macroscopic timing
  discontinuities, succeeding where traditional time-indexed controllers
  fail.
\end{abstract}


\begin{IEEEkeywords}
  Signal Temporal Logic, Controller Synthesis, Geometric Control, Formal
  verification and validation
\end{IEEEkeywords}

\section{Introduction}
\label{sec:introduction}

Cyber-Physical Systems (CPS) tightly integrate computational algorithms
with physical processes, operating in highly dynamic environments where
control logic must enforce strict safety constraints while
simultaneously accommodating complex task
objectives~\cite{lee2008cyber}. Signal Temporal Logic
(STL)~\cite{maler2004monitoring} and its robust quantitative
semantics~\cite{donze2010robust} are traditionally used to formally
specify these requirements over continuous-time trajectories. However,
standard STL is inherently bound to rigid global clocks and strict
Boolean satisfaction criteria, making it insufficient for scenarios that
require asynchronous reactions or flexible, preferential trade-offs. To
encapsulate these complex requirements, specification languages have
evolved to address different functional gaps in CPS monitoring and
synthesis. For instance, Event-Based Signal Temporal Logic
(eSTL)~\cite{gundana2021event} provides asynchronous, event-anchored
evaluation mechanics that trigger specifications based on geometric
boundaries rather than rigid clocks. Conversely, Weighted Signal
Temporal Logic (wSTL)~\cite{mehdipour2021weighted} and its extension
(wSTL+)~\cite{cardona2023preferences} fulfill a different gap by
introducing a quantitative partial-satisfaction hierarchy, enabling the
formulation of inclusive (soft) specifications scaled by preference
weights.

While these logics successfully formalize CPS behaviors, synthesizing
downstream control architectures to enforce them remains a critical
challenge~\cite{yin2024formal}. Specifically, architectures such as
Signal Temporal Logic Model Predictive Control
(STL-MPC)~\cite{raman2014model} and Time-Varying Control Barrier
Functions (TV-CBF)~\cite{lindemann2018control} rely on explicit global
or local clock variables. We demonstrate that deploying these time-based
controllers introduces severe vulnerabilities, as their performance
assumes perfect clock synchrony. In realistic scenarios, the discrete
digital timeline frequently decouples from the physical state evolution,
leading to controller failures. To circumvent these theoretical and
practical vulnerabilities, this paper introduces a fundamentally
timeless control synthesis paradigm.

Our main contributions are summarized as follows:
\begin{itemize}
\item \textbf{A New STL Specification (weSTL+):} We introduce Weighted
  Event-Based Signal Temporal Logic (weSTL+), a unified grammar that
  synthesizes the asynchronous event anchors of
  eSTL~\cite{gundana2021event} with the quantitative preference scaling
  of wSTL+~\cite{cardona2023preferences}.
\item \textbf{Formal Semantics:} We define the formal two-level
  semantics of weSTL+, bridging global event-switched Boolean rules to
  continuous local clock robustness metrics.
\item \textbf{Sound Compiler Synthesis:} We develop a multi-pass
  compiler that directly translates the weSTL+ logic formulae into a
  sound but conservative execution framework.
\item \textbf{Geometric Control:} We implement a geometric verification
  architecture that entirely eliminates explicit runtime clock
  monitoring by transforming temporal logic specifications into
  continuous-time geometric inclusions via finite-time level-set
  inversion.
\item \textbf{Case Study:} We use a case-study of autonomous robotics,
  and its ablation, under timing anomalies to show that our geometric
  control technique is the only one that is able to overcome them in
  order to guarantee safety and liveness.
\end{itemize}

\section{Vulnerability of Time-Varying Control and Overview of the
  Proposed Solution}
\label{sec:vulnerability}

Control architectures that incorporate time directly into their
synthesis, such as Time-Varying Control Barrier Functions
(TV-CBF)~\cite{lindemann2018control,gundana2021event} and Signal
Temporal Logic Model Predictive Control (STL-MPC)~\cite{raman2014model},
operate under the implicit assumption of perfect global clock synchrony.
In single agent and distributed CPS, however, clocks inherently suffer
from skew and drift due to hardware
imperfections~\cite{lisova2017monitoring,cardoso2011network,brunelli2012temperature}
in addition to network delays.

To counteract these clock anomalies, network mechanisms like the Network
Time Protocol (NTP)~\cite{mills2010ntp} and the Precision Time Protocol
(PTP)~\cite{ieee1588_2008} are utilized. While PTP enables
sub-microsecond synchronization precision and NTP regulates broad
internet networks, their error-correction protocols often rely on
discrete clock steps/snaps, rate adjustments, and network jitters.
Importantly, these discrete time jumps are not confined to microscopic
scales; substantial evidence demonstrates that clock steps/snaps can
manifest on the order of seconds or
more~\cite{guo2012realtime,nist2018improving,ieee1588_2008}.

When controllers use these volatile time signals directly, macroscopic
timing anomalies decouple the digital timeline from the continuous
physical state evolution of the system.

\subsection{Motivating Example}
\label{sec:motiv-example-demons}

Our motivating example consists of an autonomous agent navigating a
two-dimensional environment populated with three static circular
obstacles and three dynamic moving hazards. The mission dictates that
the agent must depart a starting location, reach a search waypoint,
proceed to an inspection waypoint, and safely return to a designated
base. We evaluate two traditional time-indexed control methodologies for
this task: TV-CBF \cite{lindemann2018control} and STL-MPC
\cite{raman2014model}. The standard STL property governing these
baseline formulations requires reaching the first waypoint within
$5.5$~seconds, the second between $5.5$ and $10.5$~seconds, and the base
between $10.5$ and $15.5$~seconds, while strictly avoiding all spatial
hazards.

During the simulation, the system's global clock is subjected to three
distinct asynchronous timing anomalies: a discrete forward clock jump of
$+0.5$~seconds, a backward clock reset of $-0.8$~seconds, and a
completely frozen clock phase. Such
macroscopic, multi-second temporal discontinuities are a practical
reality across modern timing
architectures~\cite{liu2020automatic,nxp2025ieee1588}.

\begin{figure}[htbp]
\centering
\includegraphics[width=\columnwidth]{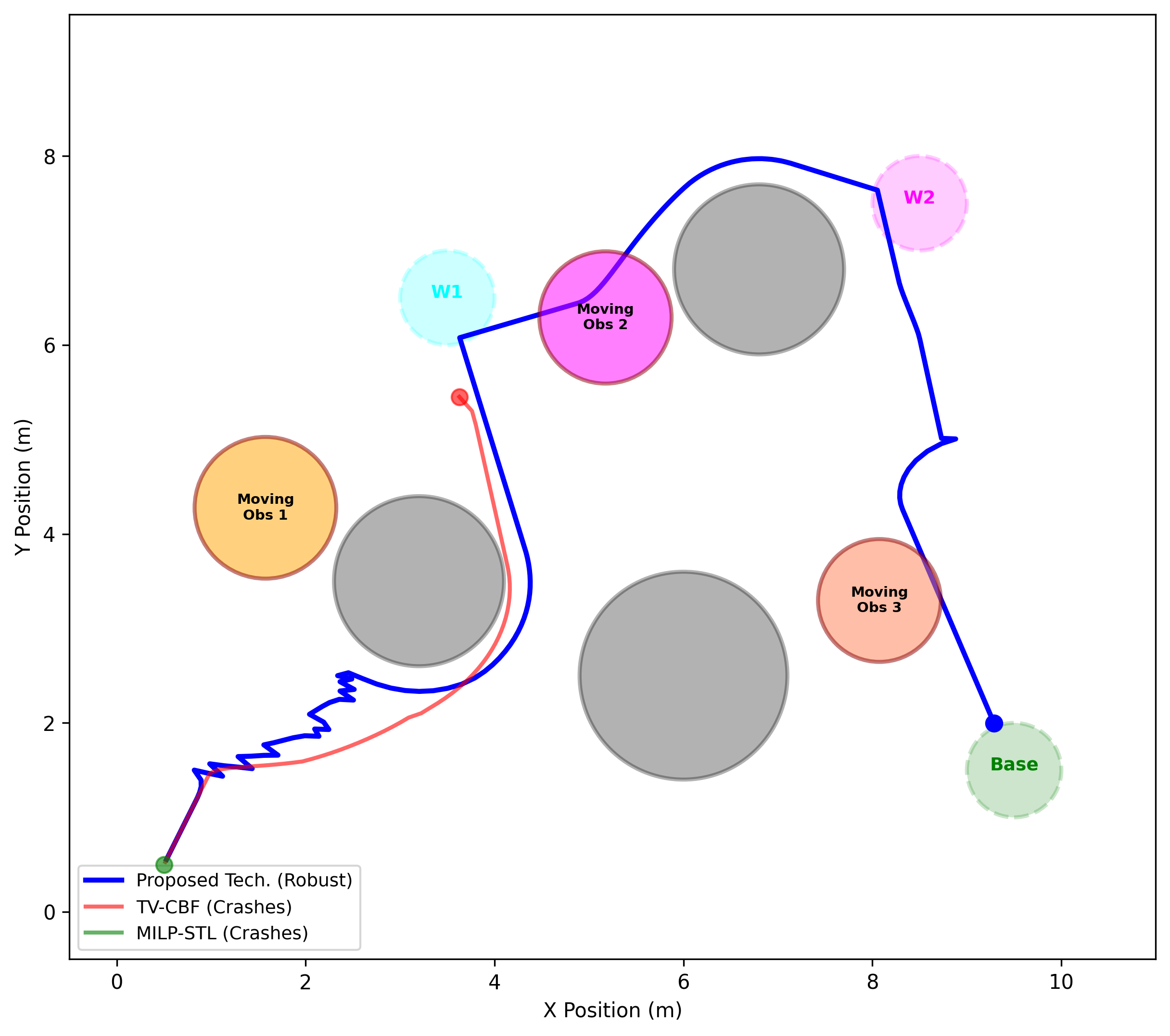}
\caption{Trajectory comparison of proposed technique, TV-CBF, and
  STL-MPC under asynchronous timing anomalies. Proposed geometric
  control trajectory going through the dynamic obstacles is not a
  collision these are so called ``ghost'' trajectories obstacles moving
  in place once the agent has passed.}
\label{fig:motivating_example}
\end{figure}

As depicted in Figure~\ref{fig:motivating_example}, both time-indexed
techniques fail catastrophically when subjected to these anomalies. The
TV-CBF architecture relies on a continuous, monotonic time evolution;
when the forward clock snap occurs, the local time bounds are instantly
violated, rendering the continuous-time constraints impossible to
satisfy and causing the controller to crash. Similarly, the STL-MPC
controller relies on a rigid discrete temporal grid. The sudden clock
jump permanently desynchronizes this discrete timeline from the physical
state evolution, which forces solver infeasibility, deadline violations,
and an ultimate system crash.

\subsection{Formalization of Clock Vulnerabilities in Current Control
  Architectures}
\label{sec:form-clock-vuln}

We now formalize how clock anomalies breaks the mathematical guarantees
of both continuous and discrete time-indexed controllers.

\begin{proposition}[Fragility of Time-Varying CBFs under Clock Perturbations]
\label{thm:tvcbf_fragility}
Let $h(x, t) \ge 0$ be a time-varying Control Barrier Function (TV-CBF).
We operate under the assumption that the TV-CBF is evaluated
continuously using the measured clock $\hat{t}(t) = t + \delta(t)$, and that
the barrier $h(x, \hat{t})$ is continuously differentiable with respect
to its temporal argument, consistent with standard continuous-time
safety formulations:
\begin{equation*}
    \nabla_x h(x, \hat{t})^T f(x, u) + \frac{\partial h(x, \hat{t})}{\partial \hat{t}}\dot{\hat{t}} \ge -\alpha(h(x, \hat{t}))
\end{equation*}
Because this formulation relies on the chain rule, if the timing anomaly
$\delta(t)$ exhibits a discontinuous jump (unbounded $\dot{\delta}(t)$), a
non-monotonic shift ($\dot{\delta}(t) < -1$), or induces a severe temporal
mismatch such that
$\frac{\partial h}{\partial \hat{t}}\big\vert_{\hat{t}} \gg \frac{\partial h}{\partial t}\big\vert_t$, the set
of admissible control inputs $\mathcal{U}_{\text{CBF}}(x, \hat{t})$ collapses or
forces premature stagnation.
\end{proposition}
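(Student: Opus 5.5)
The plan is to treat the admissible set as a half-space in $u$ and follow how its offset depends on the clock term. For control-affine dynamics $f(x,u)=f_0(x)+g(x)u$, the constraint of Proposition~\ref{thm:tvcbf_fragility} defines
\begin{equation*}
\mathcal{U}_{\text{CBF}}(x,\hat t)=\{u\in\mathcal{U}: L_g h(x,\hat t)\,u \ge -\alpha(h(x,\hat t)) - L_{f_0}h(x,\hat t) - \tfrac{\partial h}{\partial \hat t}(1+\dot\delta)\},
\end{equation*}
using $\dot{\hat t}=1+\dot\delta$. With a compact input set $\mathcal{U}$, feasibility is equivalent to
\begin{equation*}
\sup_{u\in\mathcal{U}}L_g h\,u \ge b(x,\hat t),\qquad b:=-\alpha(h)-L_{f_0}h-\tfrac{\partial h}{\partial \hat t}(1+\dot\delta).
\end{equation*}
In the standard eventually-type TV-CBF construction, $h$ tightens as time advances, so $\partial h/\partial \hat t<0$ near the deadline. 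Each anomaly then reduces to a question about the sign and size of the term $\tfrac{\partial h}{\partial \hat t}(1+\dot\delta)$ in $b$.

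\textbf{Jump.} I would model a jump as $\delta$ with a step of size $\Delta$ at time $t_0$. Then $\dot\delta$ is a Dirac mass, so $\dot{\hat t}$ is not a function and the chain-rule constraint is undefined at $t_0$. To make the failure concrete, I would compare $h(x,\hat t^-)\ge0$ with $h(x,\hat t^-+\Delta)$. The mean value theorem gives $h(x,\hat t^-+\Delta)=h(x,\hat t^-)+\Delta\,\partial_{\hat t}h(\xi)$. For $\Delta$ large enough this is negative, so the state leaves the safe set instantly and no finite $u$ can restore it. I would also consider a smooth approximation with $\dot\delta\to\infty$: $b\to+\infty$ while $\sup_u L_g h\,u$ stays bounded, so $\mathcal{U}_{\text{CBF}}=\emptyset$.

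\textbf{Non-monotone shift.} If $\dot\delta<-1$, then $1+\dot\delta<0$ and the sign of the time term flips. The controller then treats the barrier as relaxing in time when in truth the deadline keeps approaching in $t$. Here I would argue along the true time axis. The measured clock decreasing (or freezing, when $\dot\delta=-1$) means the controller enforces constraints associated with an earlier or fixed $\hat t$. The admissible set is nonempty but permits a controller that makes $\dot h$ stagnate at the $\hat t$-level, which is the claimed ``premature stagnation''. After the anomaly ends, the true deadline has passed, and the jump case applies.

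\textbf{Mismatch.} If $\partial_{\hat t}h|_{\hat t}\gg\partial_t h|_t$, the same bound on $b$ forces $L_g h\,u$ to exceed the actuator limit once $|\partial_{\hat t}h|(1+\dot\delta)>\sup_u L_g h\,u+\alpha(h)+L_{f_0}h$, and the set collapses.

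The main obstacle is that the statement is informal: words like ``collapses'', ``$\gg$'', and ``unbounded $\dot\delta$'' need precise readings. I would fix these by assuming a bounded input set, a decreasing-in-time $h$, and a distributional or limit interpretation of jumps. I would then state each conclusion as either emptiness of $\mathcal{U}_{\text{CBF}}$ or loss of forward invariance of $\{h(\cdot,t)\ge0\}$ in true time.
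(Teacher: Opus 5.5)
Your argument uses the same mechanism as the paper's proof. You expand $\dot h$ by the chain rule with $\dot{\hat t}=1+\dot\delta$, observe that an unbounded temporal term cannot be offset by bounded inputs, and note that $\dot\delta<-1$ inverts the temporal gradient. The differences are in precision and in how stagnation arises.

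Writing feasibility as $\sup_{u\in\mathcal{U}}L_g h\,u\ge b$ and fixing $\partial_{\hat t}h<0$ makes ``collapse'' a checkable condition. It also exposes an assumption the paper's snap argument leaves implicit: an unbounded $\dot{\hat t}$ forces unbounded control only when it enters with the unfavorable sign. A backward jump on a tightening barrier sends $b\to-\infty$ and makes the constraint vacuous. Such a jump therefore belongs to your stagnation case, as the limit $\dot\delta\to-\infty$, and you should say so explicitly. Your mean-value step also gives a loss-of-invariance reading of a jump, which the paper does not.

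In the non-monotone case the paper pairs the gradient inversion with a spurious perceived violation $h(x,\hat t)\le 0<h(x,t)$ under lag, which triggers maximal deceleration. Under your tightening convention a lag instead gives $h(x,\hat t)\ge h(x,t)$, i.e.\ spurious slack, so the constraint simply stops demanding progress. The paper's inequality would need a barrier that loosens in time. Both routes end in loss of liveness, and yours is the one consistent with the standard eventually-type construction.

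Three spots need tightening.
\begin{itemize}
\item The statement says the anomaly \emph{forces} stagnation, but you only show it is \emph{permitted}. Close this with the QP objective: for a minimum-norm cost, driftless dynamics and $h\ge 0$, the condition $1+\dot\delta<0$ gives $b<0$, so $u^*=0$.
\item Your claim that ``the jump case applies'' once the anomaly ends presupposes a later forward resynchronization step. Without one, the conclusion is just a missed deadline in true time.
\item ``For $\Delta$ large enough'' needs $\partial_{\hat t}h$ to stay bounded away from zero over the skipped interval. That fails once the time-varying margin has saturated at its terminal level.
\end{itemize}
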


\begin{proof}
  Under the stated assumption, the safety constraint depends strictly on
  the chain rule expansion of the corrupted time
  $\hat{t} = t + \delta(t)$, where the perceived clock rate is
  $\dot{\hat{t}} = 1 + \dot{\delta}(t)$.

\textbf{1. Clock Snaps (QP Infeasibility):} At a discontinuous snap
$t_s$, $\dot{\delta}(t)$ and consequently $\dot{\hat{t}}$ become unbounded.
Expanding the CBF constraint:
\begin{equation*}
  \dot{h}(x, \hat{t}) = L_f h(x, \hat{t}) + L_g h(x, \hat{t})u + \frac{\partial h}{\partial \hat{t}} \dot{\hat{t}} \ge -\alpha(h(x, \hat{t}))
\end{equation*}
Because $\dot{\hat{t}}$ is unbounded, satisfying this inequality
requires an unbounded control action. Since $u \in \mathcal{U}$ is strictly bounded,
the feasible control set instantly collapses
($\mathcal{U}_{\text{CBF}}(x, \hat{t}) = \emptyset$), rendering the QP solver infeasible.

\textbf{2. Phase Lag (Spurious Stalling):} For a continuous lag
$\delta(t) < 0$, the delay induces a spatio-temporal mismatch where the
system perceives a false safety violation:
$h(x, \hat{t}) \le 0 < h(x, t)$. Furthermore, if time shifts backwards
($\dot{\delta}(t) < -1$), the temporal gradient inverts. This triggers
maximum deceleration and stalls the vehicle, destroying liveness.
\end{proof}

\begin{proposition}[Loss of Liveness in Discrete STL-MPC under Timing
  Anomalies]
\label{thm:milp_liveness_loss}
Let $\phi = F_{[a, b]} \psi$ be a temporal liveness specification encoded over
a discrete time grid $t_k = k \Delta t$ ($k \in \{0, 1, \dots, N\}$) using
standard Mixed-Integer Linear Programming (MILP) Big-M bounds:
\begin{equation*}
  \mu(x(t_k)) + M (1 - z_k) \ge 0, \qquad \sum_{k=\lceil a/\Delta t \rceil}^{\lfloor b/\Delta t \rfloor} z_k \ge 1, \qquad z_k \in \{0, 1\}
\end{equation*}
Under asynchronous clock skew or execution delays $\delta(t)$, the discrete
timeline grid decouples from true physical state evolution, causing
irrevocable loss of liveness (solver infeasibility or goal reachability
stalling).
\end{proposition}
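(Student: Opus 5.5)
The argument follows the template of the proof of Proposition~\ref{thm:tvcbf_fragility}. I would make the informal claim precise by splitting it into the two mechanisms named in the statement, one per sign of the anomaly. In each case I would exhibit how the Big-M encoding, evaluated on the corrupted grid, either loses all feasible points or commits the controller to a plan the physics cannot honour.

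\textbf{Setup.} The controller indexes its grid by the measured clock $\hat{t}(t) = t + \delta(t)$. At true time $t$ it therefore believes it is at step $\hat{k} = \lfloor \hat{t}/\Delta t \rfloor$ and re-solves the MILP over the remaining window indices $\hat{k},\dots,\lfloor b/\Delta t\rfloor$. The state $x(t)$, however, evolves by the true dynamics. I would assume bounded inputs $u \in \mathcal{U}$, so that the reachable set over a horizon of length $\tau$ is bounded. Let $d(x)$ denote the minimal true time needed to drive $x$ into $\{\mu \ge 0\}$.

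\textbf{Case 1: forward snap or positive skew ($\delta > 0$).} After the snap, the indices $k < \hat{k}$ inside $[\lceil a/\Delta t\rceil, \lfloor b/\Delta t\rfloor]$ are treated as elapsed. Their $z_k$ are fixed to $0$, since no witness was recorded there. The cardinality constraint $\sum z_k \ge 1$ must then be met using only $k \ge \hat{k}$, which leaves $(\lfloor b/\Delta t\rfloor - \hat{k})\Delta t$ of perceived time. If $\delta$ exceeds the slack $b - t - d(x(t))$, every choice with $z_k = 1$ forces $\mu(x(t_k)) \ge 0$ at a state that is not reachable within the allotted steps. So $\mu(x(t_k)) + M(1-z_k) \ge 0$ with $z_k = 1$ is infeasible for all remaining $k$, and the MILP has no solution. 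Once $\hat{k} > \lfloor b/\Delta t\rfloor$ the sum is empty, so infeasibility persists for all future solves. This makes the loss irrevocable.

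\textbf{Case 2: backward reset or frozen clock ($\dot\delta \le -1$ on an interval).} Here $\hat{k}$ stalls or decreases. With a receding horizon, the solver keeps placing the witness $z_k = 1$ at a future grid index that recedes with $\hat{k}$. Each solve is feasible, but the applied first input is optimal only for a plan that is perpetually postponed. A cost that penalises control effort, or any plan that is indifferent between witness times, then yields $x(t)$ that need not enter $\{\mu\ge0\}$. Meanwhile true time passes $b$, so $\phi$ is violated on the true timeline even though the solver reports satisfaction. This is the ``goal reachability stalling'' branch of the claim.

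\textbf{Main obstacle.} The hard step is making ``irrevocable'' rigorous in Case 2. Mere postponement does not by itself force non-reachability, because an aggressive controller might still reach the goal by accident. I would handle this by stating the claim existentially: there exists a cost, or a tie-breaking rule, and an anomaly $\delta$ for which liveness is lost. Alternatively, I would restrict to a frozen clock longer than $b - a$, where the perceived window never closes while the true one does.

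For Case 1 the constructive choice is simple. Take $x(t)$ at distance exactly $d$ and a snap with $\delta > b - t - d$, and then appeal to the bounded reachable set.
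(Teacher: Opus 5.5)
Your proposal is correct, and its Case 1 is the paper's argument made uniform over plans. The paper assumes the delay pushes the measured arrival past the window, $T_{\text{reach}} + \delta_t > b$. It reads off $\mu(x(t_k)) < 0$ at every window index of the realized trajectory, so every $z_k = 0$ and $\sum z_k \ge 1$ fails, and it then only asserts that the loss is permanent. Your threshold $\delta > b - t - d(x(t))$ is the same condition with $T_{\text{reach}}$ replaced by the earliest achievable arrival $t + d(x(t))$. That change turns a monitoring verdict on one trajectory into genuine infeasibility of the online MILP. Your empty-sum remark also proves the irrevocability that the paper only states.

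Two small repairs to Case 1 are needed. First, because $(\lfloor b/\Delta t\rfloor - \hat{k})\Delta t < b - \hat{t} + \Delta t$, the threshold needs one grid step of margin: $\delta \ge b - t - d(x(t)) + \Delta t$. Second, infeasibility already persists from the snap onward, not only once the window closes. This holds because $d(x(t')) \ge d(x(t)) - (t'-t)$ along any admissible motion, while the perceived remaining time shrinks at unit rate.

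Your Case 2 has no counterpart in the paper's proof, although the statement names ``goal reachability stalling''. As you say, it can only be existential. The frozen-clock restriction by itself does not suffice, because a controller that rushes to the target and holds it can still satisfy $\phi$ on the true timeline. A quadratic effort cost already gives the example, though. Take the single integrator with cost $\sum_j \|u_j\|^2$, a convex target, and a frozen clock. The optimal plan then puts the witness at the last window index. Each applied first step therefore covers only the fixed fraction $1/n$, with $n = \lfloor b/\Delta t\rfloor - \hat{k} \ge 2$, of the remaining distance. That distance decays geometrically and never vanishes while true time passes $b$.
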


\begin{proof}
  Proof of Proposition~\ref{thm:milp_liveness_loss}. Let
  $T_{\text{reach}}$ denote the physical time at which continuous
  trajectory $x(t)$ satisfies predicate $\mu(x(T_{\text{reach}})) \ge 0$.
\begin{enumerate}
\item \textbf{Grid Index Shift (Missed Temporal Window):} Under ideal
  timing, $T_{\text{reach}} \in [a, b]$, yielding an index
  $k^* = \lfloor T_{\text{reach}}/\Delta t \rfloor \in [\lceil a/\Delta t \rceil, \lfloor b/\Delta t \rfloor]$ such that
  $z_{k^*} = 1$. Under time delay or clock drift $\delta_t$, the physical
  state reachability is delayed such that
  $T_{\text{reach}} + \delta_t > b$. The physical arrival occurs at discrete
  index $k_{\text{actual}} > \lfloor b/\Delta t \rfloor$.
    
\item \textbf{Infeasibility of Binary Sum Constraint:} Evaluating the
  discrete MILP constraint set over the original horizon requires
  $\sum_{k=\lceil a/\Delta t \rceil}^{\lfloor b/\Delta t \rfloor} z_k \ge 1$. However, for all grid indices
  $k \in [\lceil a/\Delta t \rceil, \lfloor b/\Delta t \rfloor]$, the physical state has not yet arrived
  ($\mu(x(t_k)) < 0$). Enforcing $z_k = 1$ forces a constraint violation:
  $ \mu(x(t_k)) + M(0) < 0 \implies \text{Violation of } \mu(x(t_k)) + M(1 -
  z_k) \ge 0 $. Consequently, all $z_k$ in the window are forced to $0$,
  yielding $\sum z_k = 0 < 1$.
\end{enumerate}
Because discrete MILP encodings bind temporal semantics strictly to
fixed sample indices $k$ rather than spatial capture basins, temporal
clock shifts break the logical feasibility set, permanently destroying
liveness guarantees.
\end{proof}

\subsection{Overview of the Proposed Solution}
\label{sec:proposed-solution}

To overcome the structural fragility of time-indexed control, we propose
Weighted Event-Based Signal Temporal Logic (weSTL+). This novel
specification language unifies asynchronous event anchors with
quantitative preference weights, allowing complex cyber-physical
missions to be defined by spatial boundaries and inclusive trade-offs
rather than rigid global clocks.

The proposed framework utilizes a compilation process to translate
weSTL+ formulae into a directly executable, time-independent control
policy. First, the global event-switched logic is compiled into a
discrete Hybrid Automaton (HA). Within this HA, operational modes
transition strictly upon the crossing of continuous geometric state
boundaries rather than temporal triggers. Second, for each active node
in the HA, local temporal specifications are recursively mapped into
smooth spatial surrogate fields. These purely geometric fields are then
routed into a unified, state-dependent Quadratic Program (QP).

Crucially, by leveraging finite-time level-set inversion to embed
bounded temporal horizons directly into contracted spatial boundaries,
explicit time is completely removed from both the HA mode transitions
and the compiled QP constraints. The resulting timeless geometric
controller acts entirely on the physical state space, maintaining safety
and liveness despite macroscopic clock snaps and freezes (as seen in
Figure~\ref{fig:motivating_example}). To establish the mathematical
foundation for this timeless compilation framework, the subsequent
section outlines the preliminary definitions governing system
trajectories and quantitative robustness.

\section{Preliminary Definitions}
\label{sec:base-defin-prel}

\paragraph{Global Continuous Trajectory:} Let
$x: \mathbb{R}_{\ge 0} \to \mathbb{R}^n$ be the continuous composite state trajectory
evaluated at global time $t \in \mathbb{R}_{\ge 0}$, where
$x(t) = [x_{\text{sys}}(t)^T, \theta(t)^T]^T$ encapsulates system dynamics
$x_{\text{sys}}(t)$ and dynamic environmental parameters $\theta(t)$.

\paragraph{Discrete Event Activation:} Let $\alpha$ be a discrete event label
with triggering time $t_\alpha \in \mathbb{R}_{\ge 0} \cup \{\infty\}$, defined via a state guard
predicate $\nu_{\alpha}: \mathbb{R}^n \to \mathbb{R}$:
$ t_\alpha \triangleq \inf \{ t \ge 0 \mid \nu_{\alpha}(x(t)) \ge 0 \} $, where
$t_\alpha = \infty$ if the guard set is never intersected (unactivated event).

\paragraph{Local Clock and Trajectory:} Upon event occurrence
$t_\alpha < \infty$, we define the local clock
$\tau \triangleq t - t_\alpha \ge 0$ and the time-shifted local trajectory
$x_\alpha: \mathbb{R}_{\ge 0} \to \mathbb{R}^n$ as: $ x_\alpha(\tau) \triangleq x(t_\alpha + \tau) $

\paragraph{Preference Weights:} Let $w \in \mathbb{R}_{>0}$ dictate the
importance/penalty scaling of inclusive (soft) specifications, and let
$\mathbf{w} = [w_1, \dots, w_m] \in \mathbb{R}_{>0}^m$ represent weight vectors for
inclusive connectives.

\paragraph{Quantitative Robustness Metrics:} Degree of specification
satisfaction is quantified via two real-valued satisfaction metrics:
\begin{itemize}
\item \textit{Local Robustness ($\rho_{\text{loc}}$):} A functional returning a scalar margin $\rho_{\text{loc}} \in \mathbb{R}$ for an unanchored local specification evaluated along a local trajectory $x_\alpha$ at local time $\tau$.
\item \textit{Global Robustness ($\rho$):} A functional returning an extended scalar margin $\rho \in \mathbb{R} \cup \{+\infty\}$ for an event-anchored global specification evaluated along a global trajectory $x$ at global time $t$.
\end{itemize}

\section{Formal Syntax of weSTL+}
\label{sec:formal-syntax-westl+}

Following eSTL~\cite{gundana2021event} and
wSTL+~\cite{cardona2023preferences}, the grammar is defined across two
distinct levels: Level 1 (Local STL Specifications) and Level 2 (Global
Event-Anchored weSTL+ Specifications).

\subsection{Level 1: Local STL Grammar}
Local specifications $\psi$ are standard unanchored STL formulae evaluated
on the local clock $\tau$:
\begin{align*} \psi ::= \text{True} \mid \mu \ge 0 \mid \neg \psi \mid \psi_1 \wedge \psi_2 \mid \psi_1 \vee \psi_2 \mid G_{[0,b]} \psi
  \mid F_{[0,b]} \psi
\end{align*}

where $\mu: \mathbb{R}^n \to \mathbb{R}$ is a continuously differentiable predicate function.
For environment-dependent specifications, $\mu(x)$ serves as shorthand for
$\mu(x_{\text{sys}}; \theta)$ evaluated on composite state
$x = [x_{\text{sys}}^T, \theta^T]^T$.

\subsection{Level 2: Global weSTL+ Grammar}

Let $\mathcal{E} = \{\alpha_1, \alpha_2, \dots\}$ be a set of discrete event anchors, where
each event $\alpha \in \mathcal{E}$ is defined by a continuous state guard predicate
$\nu_{\alpha}$. Global specifications $\phi$ incorporate event anchors
$\alpha \in \mathcal{E}$, logical modalities ($ex$ vs. $in$), and preference weights $w$:
\begin{equation*}
  \begin{aligned}
    \phi ::= \, &\text{True} \mid \neg \phi \mid \phi_1 \wedge_{ex} \phi_2 \mid \phi_1 \wedge_{in}^{\mathbf{w}} \phi_2 \mid \phi_1 \vee_{ex} \phi_2 \mid \phi_1 \vee_{in}^{\mathbf{w}} \phi_2 \\
             &\mid G_{ex, [0,b]}^{\alpha} \psi \mid F_{ex, [0,b]}^{\alpha} \psi \mid G_{in, [0,b]}^{\alpha, w} \psi \mid F_{in, [0,b]}^{\alpha, w} \psi
  \end{aligned}
\end{equation*}

\section{Two-Level Formal Semantics}
\label{sec:two-level-formal}

Following the two-level syntax, we define both Boolean and robust
semantics. We establish Boolean rules to dictate strict logical
satisfaction, while the quantitative robust semantics provide a
continuous margin of satisfaction essential for downstream
continuous-time control synthesis. When an event ($\alpha$) triggers, local
standard STL properties are evaluated on the local clock $\tau$, providing
a clean separation between high-level event semantics and low-level
continuous semantics.

\subsection{Level 1: Local Clock Semantics}
\label{sec:level-1:-local}

For a local trajectory $x_\alpha$ and local clock $\tau \ge 0$, the Boolean
semantics $\models_{\text{loc}}$ are defined as follows (where interval
$I = [\tau + 0, \tau + b]$):
{\small \begin{align*}
  (x_\alpha, \tau) &\models_{\text{loc}} \text{True} \iff \top \\
    (x_\alpha, \tau) &\models_{\text{loc}} (\mu \ge 0) \iff \mu(x_\alpha(\tau)) \ge 0 \\[-2pt]
    (x_\alpha, \tau) &\models_{\text{loc}} \neg \psi \iff \neg \big( (x_\alpha, \tau) \models_{\text{loc}} \psi \big) \\[-2pt]
    (x_\alpha, \tau) &\models_{\text{loc}} (\psi_1 \wedge \psi_2) \iff (x_\alpha, \tau) \models_{\text{loc}} \psi_1 \wedge (x_\alpha, \tau) \models_{\text{loc}} \psi_2 \\[-2pt]
    (x_\alpha, \tau) &\models_{\text{loc}} (\psi_1 \vee \psi_2) \iff (x_\alpha, \tau) \models_{\text{loc}} \psi_1 \vee (x_\alpha, \tau) \models_{\text{loc}} \psi_2 \\[-2pt]
    (x_\alpha, \tau) &\models_{\text{loc}} G_{[0,b]} \psi \iff \forall \tau' \in I, \, (x_\alpha, \tau') \models_{\text{loc}} \psi \\[-2pt]
    (x_\alpha, \tau) &\models_{\text{loc}} F_{[0,b]} \psi \iff \exists \tau' \in I \text{ s.t. } (x_\alpha, \tau') \models_{\text{loc}} \psi
\end{align*}
}
\noindent\textbf{Local Robustness ($\rho_{\text{loc}}$):}
\begin{align*}
  \rho_{\text{loc}}(\text{True}, x_\alpha, \tau) &\triangleq +\infty \\
    \rho_{\text{loc}}(\mu \ge 0, x_\alpha, \tau) &\triangleq \mu(x_\alpha(\tau)) \\[-2pt]
    \rho_{\text{loc}}(\neg \psi, x_\alpha, \tau) &\triangleq -\rho_{\text{loc}}(\psi, x_\alpha, \tau) \\[-2pt]
    \rho_{\text{loc}}(\psi_1 \wedge \psi_2, x_\alpha, \tau) &\triangleq \min \big( \rho_{\text{loc}}(\psi_1,x_{\alpha},\tau), \, \rho_{\text{loc}}(\psi_2,x_{\alpha},\tau) \big) \\[-2pt]
    \rho_{\text{loc}}(\psi_1 \vee \psi_2, x_\alpha, \tau) &\triangleq \max \big( \rho_{\text{loc}}(\psi_1,x_{\alpha},\tau), \, \rho_{\text{loc}}(\psi_2,x_{\alpha},\tau) \big) \\[-2pt]
    \rho_{\text{loc}}(G_{[0,b]} \psi, x_\alpha, \tau) &\triangleq \min_{\tau' \in I} \rho_{\text{loc}}(\psi, x_\alpha, \tau') \\[-2pt]
    \rho_{\text{loc}}(F_{[0,b]} \psi, x_\alpha, \tau) &\triangleq \max_{\tau' \in I} \rho_{\text{loc}}(\psi, x_\alpha, \tau')
\end{align*}

\subsection{Level 2: Global Event-Switched Semantics}

The global semantics map time $t$ to local specifications via event
times $t_\alpha$. Let $I' = [0,b]$. For $O \in \{G, F\}$, the specification is
trivially true if the event is dormant
($t < t_\alpha \vee t_\alpha = \infty$). Otherwise, it triggers local evaluation:
\begin{equation*}
    (x, t) \models O_{ex, I'}^{\alpha} \psi \iff (t < t_\alpha \vee t_\alpha = \infty) \vee \big((x_\alpha, 0) \models_{\text{loc}} O_{I'} \psi\big)
\end{equation*}

Inclusive operators ($O_{in}$) share identical Boolean truth values with
their exclusive counterparts. Standard Boolean semantics apply to
negation, conjunction, and inclusive disjunction:
\begin{align*}
  (x, t) &\models \text{True} \iff \top \\
    (x,t) &\models \neg \phi \iff \neg \big((x,t) \models \phi \big) \\[-2pt]
    (x,t) &\models \phi_1 \wedge_{ex} \phi_2 \iff \big((x,t) \models \phi_1\big) \wedge \big((x,t) \models \phi_2\big) \\[-2pt]
    (x,t) &\models \phi_1 \wedge_{in}^{\mathbf{w}} \phi_2 \iff \big((x,t) \models \phi_1\big) \wedge \big((x,t) \models \phi_2\big) \\[-2pt]
    (x,t) &\models \phi_1 \vee_{in}^{\mathbf{w}} \phi_2 \iff \big((x,t) \models \phi_1\big) \vee \big((x,t) \models \phi_2\big)
\end{align*}

\noindent Exclusive disjunction enforces mutual exclusivity:
\begin{equation*}
    (x,t) \models \phi_1 \vee_{ex} \phi_2 \iff \big((x,t) \models \phi_1 \wedge \neg \phi_2\big) \vee \big((x,t) \models \neg \phi_1 \wedge \phi_2\big)
\end{equation*}

\noindent\textbf{Global Quantitative Robustness ($\rho$):} \\
\noindent \textit{1. Boolean Connectives:}
\begin{align*}
  \rho(\text{True}, x, \tau) &\triangleq +\infty \\
    \rho(\neg \phi, x, t) &\triangleq -\rho(\phi, x, t) \\[-2pt]
  \rho(\phi_1 \wedge_{ex} \phi_2, x, t) &\triangleq \min \big( \rho(\phi_1, x, t), \, \rho(\phi_2, x, t) \big) \\[-2pt]
  \rho(\phi_1 \wedge_{in}^{\mathbf{w}} \phi_2, x, t) &\triangleq \min \big( w_1 \rho(\phi_1, x, t), \, w_2 \rho(\phi_2, x, t) \big) \\[-2pt]
    \rho(\phi_1 \vee_{ex} \phi_2, x, t) &\triangleq \min \big( \max(\rho(\phi_1, x, t), \rho(\phi_2, x, t)), \\[-2pt]
    &\qquad \qquad \;\; -\min(\rho(\phi_1, x, t), \rho(\phi_2, x, t)) \big) \\[-2pt]
    \rho(\phi_1 \vee_{in}^{\mathbf{w}} \phi_2, x, t) &\triangleq \max \big( w_1 \rho(\phi_1, x, t), \, w_2 \rho(\phi_2, x, t) \big)
\end{align*}

\noindent \textit{2. Event Operators (Exclusive and Inclusive):} \\
For $O \in \{G, F\}$, let weight $w'=1$ for exclusive modalities
($O_{ex}$) and $w'=w$ for inclusive modalities ($O_{in}$). Robustness
evaluates to $+\infty$ during the dormant phase
($t < t_\alpha \vee t_\alpha = \infty$). During the active phase
($t \ge t_\alpha \wedge t_\alpha < \infty$), it scales the local robustness:
\begin{equation*}
    \rho(O_{mod, I}^{\alpha, w'} \psi, \, x, \, t) \triangleq 
    \begin{cases}
        +\infty, & \text{dormant} \\
        w' \cdot \rho_{\text{loc}}(O_I \psi, x_\alpha, 0), & \text{active}
    \end{cases}
\end{equation*}

\section{Theoretical Guarantees: Soundness of Quantitative Semantics}
\label{sec:theor-guar-soundn}

To validate that quantitative robustness $\rho$ serves as a faithful metric
for formal specification checking, we establish the soundness of the
two-level quantitative semantics with respect to the Boolean semantics.

\subsection{Level 1 Soundness (Standard STL)}

Since Level 1 local specifications $\psi$ and local robustness metrics
$\rho_{\text{loc}}$ conform strictly to standard continuous-time
STL~\cite{donze2010robust}, Level 1 soundness follows directly from
established STL semantics.

\begin{lemma}[Level 1 Equivalence~\cite{donze2010robust,
    maler2004monitoring}]
  \label{lem:level1_soundness}
  For any local STL specification $\psi$, local trajectory $x_\alpha$, and local
  clock $\tau \ge 0$:
  $ \rho_{\text{loc}}(\psi, x_\alpha, \tau) \ge 0 \iff (x_\alpha, \tau) \models_{\text{loc}} \psi $.
\end{lemma}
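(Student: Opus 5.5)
The plan is a structural induction on $\psi$. Because negation flips the sign, I will strengthen the induction hypothesis to two statements proved together:
\begin{equation*}
\text{(A)}\; \rho_{\text{loc}}(\psi,x_\alpha,\tau)\ge 0 \iff (x_\alpha,\tau)\models_{\text{loc}}\psi, \qquad \text{(B)}\; \rho_{\text{loc}}(\psi,x_\alpha,\tau)\le 0 \iff (x_\alpha,\tau)\not\models_{\text{loc}}\psi .
\end{equation*}
(A) and (B) can hold together only if $\rho_{\text{loc}}$ never equals $0$. This is the standing non-degeneracy convention under which the cited results~\cite{donze2010robust,maler2004monitoring} read the robustness sign as exact. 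I will state it explicitly: for the given trajectory, $\mu(x_\alpha(\tau'))\neq 0$ at every $\tau'\ge 0$ and for every predicate $\mu$ occurring in $\psi$. The statement is proved under this convention, inherited from the cited sources. Without it the biconditional fails: for instance, $\neg(\mu\ge 0)$ at a point where $\mu=0$ has $\rho_{\text{loc}}=0$ although the formula is false. As a side invariant, I also carry through the induction that $\tau\mapsto\rho_{\text{loc}}(\psi,x_\alpha,\tau)$ is continuous and nowhere zero, with value $\pm\infty$ allowed for $\text{True}$-containing subformulas.

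\textbf{Base cases.} For $\text{True}$, $\rho_{\text{loc}}=+\infty>0$ and $\top$ holds, so (A) holds and both sides of (B) are false. For $\mu\ge 0$, (A) is the definition. (B) follows from non-degeneracy, since $\mu\le 0 \iff \mu<0$. Continuity comes from continuity of $\mu$ and of $x_\alpha$.

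\textbf{Inductive cases.} For $\neg\psi$, (A) for $\neg\psi$ is exactly (B) for $\psi$ after the sign flip, and (B) for $\neg\psi$ is (A) for $\psi$. For $\wedge$ and $\vee$, $\min\ge 0$ holds iff both arguments are $\ge 0$, and $\max\ge 0$ holds iff one is $\ge 0$. Dually, $\min\le 0$ holds iff one argument is $\le 0$, and $\max\le 0$ holds iff both are. Each of these matches the Boolean clause by the induction hypothesis. Non-vanishing and continuity are preserved because $\min$ and $\max$ of two continuous, nowhere-zero functions are again continuous and nowhere zero.

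\textbf{Temporal operators.} For $G_{[0,b]}\psi$ and $F_{[0,b]}\psi$, the window $I=[\tau,\tau+b]$ is compact. The inner robustness is continuous by the side invariant, so the $\min$ and $\max$ are attained at some $\tau^*\in I$. Hence $\min_{I}\rho_{\text{loc}}\ge 0$ holds iff $\rho_{\text{loc}}(\psi,x_\alpha,\tau')\ge 0$ for all $\tau'\in I$, which by (A) is the $G$ clause. Likewise $\max_I\rho_{\text{loc}}\ge 0$ holds iff some $\tau'\in I$ has $\rho_{\text{loc}}\ge 0$, which is the $F$ clause. (B) follows dually, and the attained extremum is a value of the inner function, hence nonzero. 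Continuity in $\tau$ of a sliding-window extremum of a continuous function is standard, which closes the invariant.

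The main obstacle is the negation case, which is where the strengthened pair (A)/(B) and the non-degeneracy convention are essential. The secondary point is attainment of the extrema in the temporal cases, which is handled by continuity plus compactness.
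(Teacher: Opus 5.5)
The paper gives no argument of its own here; it defers to standard STL robustness \cite{donze2010robust,maler2004monitoring}. Your explicit structural induction is therefore a genuinely different and more careful route. Your counterexample is right: at a point with $\mu(x_\alpha(\tau))=0$, the formula $\neg(\mu\ge 0)$ has $\rho_{\text{loc}}=0$ but is false. So the biconditional as stated does not hold, and the citation cannot supply it, because the robustness results being cited give only strict-sign soundness ($\rho>0$ implies satisfaction, $\rho<0$ implies violation). For the same reason, do not present your non-degeneracy condition as a ``standing convention'' of those sources; it is a hypothesis you are adding. The same defect reappears in the negation step of Theorem~\ref{thm:global_soundness}. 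Under your hypothesis the induction itself is sound. The paired invariants (A)/(B), the nowhere-zero and continuity invariants, and the use of compactness exactly where a witness is needed ($F$ in (A), $G$ in (B)) all check out.

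The weakness is the choice of hypothesis. Since $\mu\circ x_\alpha$ is continuous on the connected set $[0,\infty)$, requiring $\mu(x_\alpha(\tau'))\neq 0$ for all $\tau'\ge 0$ forces every predicate to keep a constant sign along the entire local trajectory. Every subformula then has a $\tau$-independent truth value and the temporal operators become trivial, so what you prove is close to vacuous. Your own machinery gives two better repairs.
\begin{itemize}
\item Run the paired induction with strict inequalities, $\rho_{\text{loc}}>0\Rightarrow$ satisfied and $\rho_{\text{loc}}<0\Rightarrow$ violated. This needs no hypothesis and not even attainment, because $\inf<0$ or $\sup>0$ already produces a witness time.
\item Restrict to the negation-free fragment. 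There, induction on (A) alone yields the exact $\ge 0$ biconditional with no non-degeneracy, using continuity and compactness only for $F$. This is the form the paper actually needs, since $\psi_{\text{safe}}$ and the reach conditions in the case study are built from predicates without negation.
\end{itemize}
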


\subsection{Level 2 Global Soundness}

We establish the soundness of the Level 2 quantitative robustness
semantics $\rho$ relative to the global Boolean evaluation $\models$ purely as a
property over arbitrary continuous trajectories $x$.

\begin{theorem}[Global Soundness for all weSTL+ Modalities]
\label{thm:global_soundness}
For any global weSTL+ specification $\phi$ generated by the Level 2
grammar, continuous trajectory $x$, and global time $t \ge 0$:
$ \rho(\phi, x, t) \ge 0 \iff (x,t) \models \phi $.
\end{theorem}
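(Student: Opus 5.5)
We argue by structural induction on the Level~2 grammar of $\phi$. The inductive hypothesis is the exact statement of Theorem~\ref{thm:global_soundness}: for each immediate subformula $\phi_i$, and for every trajectory $x$ and time $t$, $\rho(\phi_i,x,t)\ge 0 \iff (x,t)\models\phi_i$. The only external ingredient is Lemma~\ref{lem:level1_soundness}, which we invoke exactly in the form stated, $\rho_{\text{loc}}(\psi,x_\alpha,\tau)\ge 0 \iff (x_\alpha,\tau)\models_{\text{loc}}\psi$. We use it at $\tau=0$ for the local formulae $G_{[0,b]}\psi$ and $F_{[0,b]}\psi$.

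\textbf{Base cases.} The case $\phi=\text{True}$ is immediate, since $\rho=+\infty\ge 0$ and the formula holds trivially.

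For an event operator $O^{\alpha,w'}_{mod,[0,b]}\psi$ with $O\in\{G,F\}$, we split on the event phase. In the dormant phase ($t<t_\alpha$ or $t_\alpha=\infty$), both sides are true: $\rho=+\infty$, and the Boolean disjunction holds through its first disjunct. In the active phase, $\rho=w'\rho_{\text{loc}}(O_{[0,b]}\psi,x_\alpha,0)$ with $w'\in\{1,w\}$ and $w>0$. Multiplication by a strictly positive scalar preserves the sign, so $\rho\ge0\iff\rho_{\text{loc}}(O_{[0,b]}\psi,x_\alpha,0)\ge0$. By Lemma~\ref{lem:level1_soundness} this is equivalent to $(x_\alpha,0)\models_{\text{loc}}O_{[0,b]}\psi$, which is exactly the active-phase Boolean clause. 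The inclusive and exclusive modalities are handled identically, because they share their Boolean semantics.

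\textbf{Inductive cases.} Write $r_i=\rho(\phi_i,x,t)$.
\begin{itemize}
\item For $\wedge_{ex}$, use $\min(r_1,r_2)\ge0\iff r_1\ge0\wedge r_2\ge0$.
\item For $\wedge^{\mathbf w}_{in}$ and $\vee^{\mathbf w}_{in}$, first note that $w_ir_i\ge0\iff r_i\ge0$ since $w_i>0$. Then apply the min/max identities $\min(a,b)\ge0\iff a\ge0\wedge b\ge0$ and $\max(a,b)\ge0\iff a\ge0\vee b\ge0$. The convention $w\cdot(+\infty)=+\infty$ covers dormant subformulae.
\item For $\vee_{ex}$, expand $\min(\max(r_1,r_2),-\min(r_1,r_2))\ge0$ into the conjunction of $\max(r_1,r_2)\ge0$ and $\min(r_1,r_2)\le0$. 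We then rewrite this as $(r_1\ge0\wedge r_2\le0)\vee(r_2\ge0\wedge r_1\le0)$ and match it to $(\phi_1\wedge\neg\phi_2)\vee(\neg\phi_1\wedge\phi_2)$. This step uses the negation case below.
\end{itemize}

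\textbf{The negation case is the main obstacle.} We need $-r\ge0\iff\neg\big(r\ge0\big)$, that is, $r\le 0\iff r<0$. These agree everywhere except on the boundary $r=0$. The plan is to resolve this boundary with the same convention under which Lemma~\ref{lem:level1_soundness} is asserted unconditionally for Level~1, since the Level~1 grammar also contains $\neg$. Concretely, we treat $\rho=0$ as the marginal value that, as in the standard STL treatment of \cite{donze2010robust}, is assigned the satisfaction verdict of the formula it is evaluated on. Negation of a zero-robustness subformula then yields the complementary verdict rather than a second non-negative value. With the sign of zero fixed this way, the identity $\rho(\neg\phi)\ge0\iff\neg(\rho(\phi)\ge0)$ holds, and it also closes the $\vee_{ex}$ case above, where $r_i\le0$ must be read as $\neg\phi_i$.

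I expect most of the care to go into stating this boundary convention cleanly and checking that it is preserved through the weighted min/max and $\vee_{ex}$ cases. The induction then concludes the theorem for all $x$ and $t\ge0$.
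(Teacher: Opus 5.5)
Your skeleton (structural induction, the dormant/active split, sign invariance under positive weights, and Lemma~\ref{lem:level1_soundness} at $\tau=0$) matches the paper's. You also located the weak point correctly. However, the ``convention'' you propose cannot close it, because the statement is false at $\rho=0$. Robustness takes values in $\mathbb{R}\cup\{\pm\infty\}$, with $\rho(\neg\phi,x,t)\triangleq-\rho(\phi,x,t)$ fixed by definition. There is no signed zero or attached verdict to use, and assigning one changes the semantics rather than proving the theorem. Concretely, let $\nu_\alpha\equiv 1$ (so $t_\alpha=0$), $\mu\equiv 0$, and $\phi_0=F^{\alpha}_{ex,[0,b]}(\mu\ge 0)$. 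Then $\rho(\phi_0,x,0)=0$ and $(x,0)\models\phi_0$, so $\rho(\neg\phi_0,x,0)=0\ge 0$ while $(x,0)\not\models\neg\phi_0$. The same data break your $\vee_{ex}$ step: $\rho(\phi_0\vee_{ex}\phi_0,x,0)=\min(0,-0)=0$ even though both disjuncts hold.

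Nor does \cite{donze2010robust} supply such a convention. Standard STL soundness is the strict statement: $\rho>0$ implies satisfaction, $\rho<0$ implies violation, and $\rho=0$ is inconclusive. Lemma~\ref{lem:level1_soundness} as written is itself false for $\neg(\mu\ge 0)$ at a point where $\mu=0$, so appealing to ``the convention under which it is asserted'' inherits the same error. The paper's own proof has exactly this hole. It asserts $\rho(\phi)\le 0\iff(x,t)\models\neg\phi$, and it claims that a nonnegative $\vee_{ex}$ robustness forces exactly one nonnegative subformula; both claims fail at zero.

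Your induction does establish a corrected statement, without any conventions, in either of two forms.

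\textbf{Strict form.} Take as inductive hypothesis the pair $\rho(\phi,x,t)>0\Rightarrow(x,t)\models\phi$ and $\rho(\phi,x,t)<0\Rightarrow(x,t)\not\models\phi$, using the strict form of Lemma~\ref{lem:level1_soundness}. Negation swaps the two clauses, and positive weights and $\min/\max$ preserve strict signs. For $\vee_{ex}$, the robustness is $>0$ iff one $r_i>0$ and the other is $<0$, and it is $<0$ iff both $r_i<0$ or both $r_i>0$. This yields the $\Leftarrow$ direction of the theorem, by contraposition of the second clause. In the other direction it gives only $\rho>0\Rightarrow(x,t)\models\phi$.

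\textbf{Restricted fragment.} Alternatively, keep the biconditional but restrict to the fragment with no $\neg$ at either level and no $\vee_{ex}$. There your $\min/\max$ and weighting cases go through verbatim. For $F$ you additionally need that window maxima of the continuous robustness signal are attained.
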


\begin{proof}
The proof proceeds by structural induction over the syntax of the Level 2 weSTL+ grammar.

\noindent\textbf{Base Cases:} 
For $\phi = \text{True}$,
$\rho(\text{True}) = +\infty \ge 0 \iff \text{True}$. 

\noindent\textbf{Inductive Step (Boolean Connectives):} 
Assume the equivalence holds for sub-formulas $\phi_1, \phi_2$. Since preference weights $w, \mathbf{w} > 0$ strictly preserve signs, the quantitative robustness aligns with the Boolean rules:
\begin{itemize}
\item \textit{Conjunctions ($\wedge_{ex}, \wedge_{in}^{\mathbf{w}}$):} Both
  $\min(\rho(\phi_{1}), \rho(\phi_{2})) \ge 0$ and
  $\min(w_1 \rho(\phi_{1}), w_2 \rho(\phi_{2})) \ge 0$ hold if and only if
  $\rho(\phi_{1}) \ge 0$ and $\rho(\phi_{2}) \ge 0$ (since
  $w_1, w_2 > 0$), satisfying the strict Boolean intersection.
\item \textit{Disjunctions ($\vee_{ex}, \vee_{in}^{\mathbf{w}}$):} For
  inclusive disjunction,\\
  \mbox{$\max(w_1 \rho_1, w_2 \rho_2) \ge 0 \iff \rho_1 \ge 0 \vee \rho_2 \ge 0$}, where
  $\rho_{1} = \rho(\phi_{1})$ and $\rho_{2} = \rho(\phi_{2})$. For exclusive disjunction, \\
  $\min(\max(\rho_1, \rho_2), -\min(\rho_1, \rho_2)) \ge 0$ guarantees that exactly
  one sub-formula is non-negative, matching strict mutual exclusivity.
    \item \textit{Negation ($\neg \phi$):} $\rho(\neg \phi) = -\rho(\phi) \ge 0 \iff \rho(\phi) \le 0 \iff (x,t) \models \neg \phi$.
\end{itemize}

\noindent\textbf{Inductive Step (Event Modalities):} 
Let $O \in \{G, F\}$ and let $w = 1$ for exclusive modalities ($O_{ex}$) or $w > 0$ for inclusive modalities ($O_{in}$). Evaluating across the event activation boundary $t_\alpha$ yields:
\begin{itemize}
    \item \textit{Dormant Phase ($t < t_\alpha$ or $t_\alpha = \infty$):} By definition, $\rho(\phi, x, t) = +\infty \ge 0$, which evaluates to $\text{True}$.
    \item \textit{Active Phase ($t \ge t_\alpha$ and
        $t_\alpha < \infty$):} By global robustness definition,
      $\rho(\phi, x, t) = w \cdot \rho_{\text{loc}}(O_{[0,b]} \psi, x_\alpha, 0)$. Applying
      weight sign-invariance alongside Level~1 STL soundness
      (Lemma~\ref{lem:level1_soundness}):
    \begin{align*}
      w \cdot \rho_{\text{loc}}(O_{[0,b]} \psi, x_\alpha, 0) \ge 0 
      &\iff \rho_{\text{loc}}(O_{[0,b]} \psi, x_\alpha, 0) \ge 0 \\
      &\iff (x_\alpha, 0) \models_{\text{loc}} O_{[0,b]} \psi \\
      &\iff (x,t) \models \phi
    \end{align*}
\end{itemize}
By structural induction, the soundness equivalence holds for all valid
weSTL+ specifications.
\end{proof}

\section{Compiling weSTL+ to Event Driven Geometric Control}
\label{sec:compiler-rules}

In this section we give the compilation steps for the proposed weSTL+
grammar. We first give the overall compilation approach in
Section~\ref{sec:overall-compilation-approach}. Followed by the crucial
technique of inverting time into spatial level sets in
Section~\ref{sec:cont-time-geom}. Finally, the detailed compilation
steps are described in
Sections~\ref{sec:local-stl-inductive}--~\ref{sec:ha-generation}.

\subsection{Overall Compilation Approach}
\label{sec:overall-compilation-approach}

The compilation of a global weSTL+ specification into an executable
event-driven control policy proceeds in two primary sequential
steps. First, the specification undergoes structural
validation, flattening, and spatial surrogate generation. Second, the
resulting time-independent surrogates are mapped into affine
differential constraints for optimization.

\textbf{Step 1: Structural Validation and Surrogate Generation} \\ The
compiler processes the logic through a strict functional composition
pipeline to produce the final surrogate constraint sets, defined as:
\begin{equation*}
  \hat{\rho} = \text{Compile}(\text{valid}_{\text{CNF}}(\text{CNF}(\text{Flatten}(\text{valid}(\phi)))))
\end{equation*}
\begin{itemize}
\item \textbf{Validation (valid):} The compiler verifies
  that the specification contains no alternating temporal nestings (e.g.,
  $F(G(\dots))$ or $G(F(\dots))$). If such nested heterogeneous operators
  are detected, the program is rejected to preserve the decoupled control
  synthesis architecture.

\item \textbf{Flattening (Flatten):} To mathematically preserve temporal
  semantics while strictly adhering to the two-level grammar, the
  flattening function recursively absorbs inner homogeneous Level 1
  operators and distributes temporal boundaries. To prevent over-flattening
  and preserve the efficiency of the continuous-time spatial mappings,
  temporal operators are only distributed across connectives if the inner
  specifications contain nested temporal horizons. Distributed Level 1
  connectives are universally elevated to their exclusive Level 2 counterparts
  ($\wedge_{ex}, \vee_{ex}$). For any temporal operator
  $O \in \{G, F\}$, modality $mod \in \{ex, in\}$, event anchor
  $\alpha$, weight $w$, Level 1 specification $\psi$, and Level 2 connective
  $\circ \in \{\wedge_{ex}, \wedge_{in}^{\mathbf{w}}, \vee_{ex}, \vee_{in}^{\mathbf{w}}\}$,
  recursive flattening is defined as:
  {\small \begin{align*}
    \text{Flatten}(\phi_1 \circ \phi_2) &= \text{Flatten}(\phi_1) \circ \text{Flatten}(\phi_2) \\
    \text{Flatten}\big(O_{mod, [0, a]}^{\alpha, w}(O_{[0, b]} \psi)\big) &= \text{Flatten}\big(O_{mod, [0, a+b]}^{\alpha, w}(\psi)\big) \\
    \text{Flatten}\big(G_{mod, [0, a]}^{\alpha, w}(\psi_1 \wedge \psi_2)\big) &= \text{Flatten}\big(G_{mod, [0, a]}^{\alpha, w}(\psi_1)\big) \\
    &\quad \wedge_{ex} \text{Flatten}\big(G_{mod, [0, a]}^{\alpha, w}(\psi_2)\big) \\
    \text{Flatten}\big(F_{mod, [0, a]}^{\alpha, w}(\psi_1 \vee \psi_2)\big) &= \text{Flatten}\big(F_{mod, [0, a]}^{\alpha, w}(\psi_1)\big) \\
    &\quad \vee_{ex} \text{Flatten}\big(F_{mod, [0, a]}^{\alpha, w}(\psi_2)\big) \\
    \text{Flatten}\big(O_{mod, [0, a]}^{\alpha, w}(\psi)\big) &= O_{mod, [0, a]}^{\alpha, w}(\psi) \\
    \text{Flatten}(\psi) &= \psi 
  \end{align*}
  }where the distributive rules for $G$ and $F$ apply strictly when $\psi_1$ or
  $\psi_2$ contain nested temporal bounds. The terminal base cases apply
  exclusively when the inner specification $\psi$ is devoid of internal
  temporal bounds, allowing pure unnested spatial propositions to be
  evaluated efficiently via smooth local surrogates. If the recursive pass
  encounters mathematically unflattenable combinations containing nested
  temporal horizons (e.g., $G$ over $\vee$, or $F$ over $\wedge$), the compiler
  throws a structural validity error, as downstream continuous-time spatial
  mappings cannot process trapped internal temporal horizons.

\item \textbf{Normalization (CNF):} To ensure uniform routing to the QP
  solver, the flattened boolean skeleton is converted into Conjunctive
  Normal Form (CNF): $\bigwedge_i (\bigvee_j \phi_{i,j})$. This guarantees that all
  disjunctions reside strictly below conjunctions.
\item \textbf{Post-CNF Validation ($\text{valid}_{\text{CNF}}$):} To
  preserve soundness when mapping logic to differential constraints, the
  compiler verifies that no inner disjunctive clause mixes heterogeneous
  temporal operators. If any clause contains both global safety ($G$)
  and liveness ($F$) operators (e.g., $G \vee F$), the program is rejected,
  as safety and liveness requirements exhibit fundamentally incompatible
  control semantics under disjunction.
\item \textbf{Surrogate Construction (Compile):} The normalized
  specification is mapped into differentiable spatial fields. As defined
  in Sections~\ref{sec:local-stl-inductive}
  and~\ref{sec:global-event-level}, this step generates the local
  inductive surrogates and applies global temporal horizon embeddings.
\end{itemize}

\textbf{Step 2: Control Mapping and Synthesis} \\ Once the spatial
surrogates are constructed, they are mapped into continuous control
laws:
\begin{itemize}
\item \textbf{Constraint Mapping:} As detailed in
  Section~\ref{sec:glob-temp-contr}, the active spatial surrogates are
  translated into corresponding Control Barrier Functions (CBFs) for
  safety and Control Lyapunov Functions (CLFs) for liveness, each
  augmented with appropriate preference-weighted slack variables.
\item \textbf{Quadratic Program (QP) Assembly:}
  Section~\ref{sec:unif-contr-synth} defines how these independent
  affine differential constraints are unified into a state-dependent QP,
  ensuring the objective function penalizes slack variables according to
  their specified preference weights.
\item \textbf{Hybrid Automaton (HA) Generation and Binding:} Finally, as
  detailed in Section~\ref{sec:ha-generation}, the global specification
  is compiled into a time-independent Hybrid Automaton. The
  event-specific QPs are bound to discrete operational modes
  $q \in \mathcal{Q}$. Because mode transitions are triggered exactly when the
  physical state crosses a continuous guard boundary
  ($\nu_{\alpha}(x(t)) \ge 0$), the active subsets of compiled QP constraints are
  reconfigured dynamically, executing the mission without reliance on a
  global clock.
\end{itemize}

\subsection{Continuous-Time Geometric Semantic Mappings}
\label{sec:cont-time-geom}

To eliminate the need for explicit global clocks during closed-loop
execution, we systematically map temporal windows into purely geometric
spatial constraints. This transformation is fundamentally anchored by
the Bhat-Bernstein theorem~\cite{bhat2000finite}, which originally
established that continuous dynamical systems can guarantee strict
finite-time convergence to a target set by satisfying a fractional-power
Lyapunov dissipation condition. By adapting this finite-time stability
principle to enforce bounded temporal specifications organically through
system dynamics, we first introduce a general lemma for finite-time
level-set inversion.

\begin{lemma}[Finite-Time Level-Set Inversion]
\label{lemma:inversion}
Consider a continuously differentiable function $z(x) \ge 0$ evolving
along a local trajectory $x_\alpha(\tau)$ starting at event trigger,
$t_{\alpha}$, with $\tau$ reset. Assume the continuous-time dynamics are bounded
by a fractional power function: $\dot{z} \le -c z^\beta$ (for upper bounds) or
$\dot{z} \ge -c z^\beta$ (for lower bounds), where $c > 0$ and $\beta \in (0,1)$.

For a fixed time horizon $\Delta T > 0$, we define the geometric inversion
mapping $\mathcal{I}(c, \beta, \Delta T)$ as:
$ \mathcal{I}(c, \beta, \Delta T) \triangleq \left[ c (1-\beta) \Delta T \right]^{\frac{1}{1-\beta}} $. Then, the
temporal requirement of reaching $z(x_\alpha(T)) = 0$ at a time
$T \le \Delta T$ is guaranteed if
$z(x_\alpha(0)) \le \mathcal{I}(c, \beta, \Delta T)$. Conversely, avoiding a boundary breach
$z(x_\alpha(\tau)) = 0$ for all $\tau \in [0, \Delta T]$ is mathematically guaranteed if
$z(x_\alpha(0)) \ge \mathcal{I}(c, \beta, \Delta T)$.
\end{lemma}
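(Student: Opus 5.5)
The plan is to reduce both claims to a scalar comparison argument on the transformed quantity $V(\tau) \triangleq z(x_\alpha(\tau))^{1-\beta}$. This is the standard device behind the Bhat--Bernstein finite-time result~\cite{bhat2000finite}. Since $\beta \in (0,1)$, the map $s \mapsto s^{1-\beta}$ is continuous on $[0,\infty)$ and continuously differentiable on $(0,\infty)$. Along the local trajectory, on any interval where $z > 0$, the chain rule gives
\[
\dot V = (1-\beta)\, z^{-\beta}\, \dot z .
\]
Substituting the assumed differential inequality turns the nonlinear bound into a constant-rate bound: $\dot V \le -c(1-\beta)$ under the upper-bound hypothesis $\dot z \le -c z^\beta$, and $\dot V \ge -c(1-\beta)$ under the lower-bound hypothesis $\dot z \ge -c z^\beta$. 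Integrating from the event trigger ($\tau = 0$, i.e.\ $t = t_\alpha$) yields the linear envelopes
\[
V(\tau) \le V(0) - c(1-\beta)\tau \quad\text{and}\quad V(\tau) \ge V(0) - c(1-\beta)\tau ,
\]
each valid as long as $z$ stays positive. The critical time at which the envelope vanishes is $T^* = z(x_\alpha(0))^{1-\beta}/\big(c(1-\beta)\big)$. The inequality $T^* \le \Delta T$ is equivalent to $z(x_\alpha(0)) \le [c(1-\beta)\Delta T]^{1/(1-\beta)} = \mathcal{I}(c,\beta,\Delta T)$, which is exactly the inversion mapping.

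\textbf{Reachability (upper bound).} Suppose for contradiction that $z > 0$ on all of $[0, T^*]$. Then the upper envelope forces $V(T^*) \le 0$, which contradicts $z > 0$. Hence there is some $T \le T^* \le \Delta T$ with $z(x_\alpha(T)) = 0$. By continuity of $z$, the first such $T$ is well defined as an infimum. I would also note that $z \ge 0$ together with $\dot z \le 0$ keeps the trajectory on the level set after capture, although the statement only requires reaching it.

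\textbf{Avoidance (lower bound).} Let $\tau_0$ be the first zero of $z$ on $[0,\Delta T]$. On $[0,\tau_0)$ the lower envelope holds, and letting $\tau \to \tau_0$ gives $0 \ge V(0) - c(1-\beta)\tau_0$, that is, $\tau_0 \ge T^*$. When $z(x_\alpha(0)) \ge \mathcal{I}$ we have $T^* \ge \Delta T$, so no breach can occur strictly before $\Delta T$. If the inequality is strict, then no breach occurs on the closed interval $[0,\Delta T]$ either.

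The main obstacle is rigor at the singular point $z = 0$. There the map $s^{1-\beta}$ is not differentiable and $z^{-\beta}$ blows up, so the chain rule cannot be applied on intervals that touch the boundary. I would handle this by always working on the maximal half-open interval on which $z > 0$, and then passing to the limit using continuity of $V$. The comparison step on that interval is justified either by the fundamental theorem of calculus, since $V$ is $C^1$ there, or by the scalar comparison lemma.

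A second subtlety is the equality case $z(x_\alpha(0)) = \mathcal{I}$ in the avoidance claim. Here the envelope only shows $\tau_0 \ge \Delta T$, so the boundary may be touched exactly at $\Delta T$. I would resolve this either by reading the guarantee as $z > 0$ on $[0,\Delta T)$ with $z(\Delta T) \ge 0$, which is consistent with a non-strict STL robustness threshold, or by requiring the strict inequality $z(x_\alpha(0)) > \mathcal{I}$ for strict avoidance on the closed window. I would state this choice explicitly rather than leave it implicit.
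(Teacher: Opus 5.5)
Your argument is the paper's separation-of-variables proof made rigorous: the substitution $V = z^{1-\beta}$, with comparison on the maximal interval where $z>0$, is exactly what the paper's integration from $z(x_\alpha(0))$ to $0$ does implicitly, and your contradiction argument also shows the zero is actually reached, whereas the paper only bounds the time of a ``hypothetical'' breach. Your caveat on the equality case is correct and is not addressed by the paper: its derivation only gives breach time $T \ge \Delta T$, which permits $z(x_\alpha(\Delta T)) = 0$ when $z(x_\alpha(0)) = \mathcal{I}(c,\beta,\Delta T)$ (take $\dot z = -c z^\beta$ exactly), so closed-window avoidance needs the strict inequality or a half-open window.
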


\begin{proof}
  For the convergence (liveness) case, we begin with the assumed upper
  bound $\dot{z} \le -c z^\beta$. By applying separation of variables and
  integrating from the initial state $z(x_\alpha(0))$ at $\tau=0$ to a
  hypothetical boundary breach $z(x_\alpha(T)) = 0$ at time $T$, we obtain
  the upper bound on the required arrival time:
  $T \le \frac{z(x_\alpha(0))^{1-\beta}}{c(1-\beta)}$. To guarantee that this arrival
  occurs within the required temporal horizon ($T \le \Delta T$), isolating
  $z(x_\alpha(0))$ yields the strict geometric upper bound
  $z(x_\alpha(0)) \le \mathcal{I}(c, \beta, \Delta T)$.

  Conversely, for the avoidance (safety) case, bounding the dynamics from
  below by $\dot{z} \ge -c z^\beta$ and integrating via separation of variables
  in the exact same manner yields the minimum time before a boundary
  breach: $T \ge \frac{z(x_\alpha(0))^{1-\beta}}{c(1-\beta)}$. Guaranteeing that the
  trajectory avoids the boundary for at least the entire time horizon
  ($T \ge \Delta T$) requires enforcing the corresponding geometric lower bound
  $z(x_\alpha(0)) \ge \mathcal{I}(c, \beta, \Delta T)$.
\end{proof}

\subsection{Local STL Inductive Surrogate Semantics}
\label{sec:local-stl-inductive}

For a local specification $\psi$, the smooth differentiable surrogate
operator
$\hat{\rho} = \rho^\kappa: \text{LocalSTL} \to C^1(\mathbb{R}^n, \mathbb{R})$ recursively maps logical
constraints into purely scalar spatial fields. The surrogate function
$\rho^\kappa(\psi)$ is defined inductively over the syntax of the base predicates
and boolean operators:
{\small \begin{align*}
  \hat{\rho} &= \rho^\kappa(\text{True}) &= \infty \\
  \hat{\rho} &= \rho^\kappa(\mu \ge 0) &= \mu(x) \\ 
  \hat{\rho} &= \rho^\kappa(\neg \psi) &= -\rho^\kappa(\psi) \\
  \hat{\rho} &= \rho^\kappa(\psi_1 \wedge \psi_2) &= -\frac{1}{\kappa} \ln \left( \sum_{i=1}^2 \exp \big(-\kappa \rho^\kappa(\psi_i) \big) \right) \\
  \hat{\rho} &= \rho^\kappa(\psi_1 \vee \psi_2) &= \frac{1}{\kappa} \ln \left( \sum_{i=1}^2 \exp \big(\kappa \rho^\kappa(\psi_i) \big) \right) - \frac{\ln 2}{\kappa}
\end{align*}
}
\subsection{Global Event-Level Surrogate Isolation \& Connectives}
\label{sec:global-event-level}

To prevent complex dynamic entanglements, global temporal operators ($G$
and $F$) do not output differential constraints directly. Instead, they
output time-independent, contracted spatial surrogates.

To preserve both the continuous spatial field $\hat{\rho}$ and the
operational metadata (required for QP assembly), the compilation is
executed bottom-up. The compiler evaluates the local spatial field and
packages it into a constraint tuple:
$\mathcal{K} = (\hat{\rho}, \text{Op}, \text{Mod}, \alpha, w)$.

\textbf{Global Temporal Operators:} When an event anchor
$\alpha \in \mathcal{E}$ is triggered, the compiler applies the finite-time inversion
mapping $\mathcal{I}$ (Lemma~\ref{lemma:inversion}) to the local surrogate
$\rho^\kappa(\psi)$. Crucially, the first element of each tuple isolates the final
spatial field $\hat{\rho}$:
\begin{align*}
  \text{Compile}(G_{ex, [0,b]}^{\alpha} \psi) &= \Big( \underbrace{\rho^\kappa(\psi) - \mathcal{I}(c_h, \beta_h, b)}_{\hat{\rho}}, \; G, \; ex, \; \alpha, \; \emptyset \Big) \\
  \text{Compile}(G_{in, [0,b]}^{\alpha, w} \psi) &= \Big( \underbrace{\rho^\kappa(\psi) - \mathcal{I}(c_h, \beta_h, b)}_{\hat{\rho}}, \; G, \; in, \; \alpha, \; w \Big) \\
  \text{Compile}(F_{ex, [0,b]}^{\alpha} \psi) &= \Big( \underbrace{\rho^\kappa(\psi)}_{\hat{\rho}}, \; F, \; ex, \; \alpha, \; \emptyset \Big) \\
  \text{Compile}(F_{in, [0,b]}^{\alpha, w} \psi) &= \Big( \underbrace{\rho^\kappa(\psi)}_{\hat{\rho}}, \; F, \; in, \; \alpha, \; w \Big)
\end{align*}

\textbf{Global Connectives and Surrogate Routing:} Because standard QP
solvers find solutions within the convex intersection of all provided
constraints, the compiler handles top-level connectives by routing or
merging constraint tuples bottom-up:
\begin{itemize}

\item \textit{Conjunctions ($\wedge_{ex}$ and $\wedge_{in}^{\mathbf{w}}$):} The compiler processes top-level conjunctions depending on their modality:
  \begin{itemize}
  \item \textit{Exclusive ($\wedge_{ex}$):} The compiler bypasses top-level
    aggregation. It treats exclusive conjunctions as stacked constraints
    (set unions), outputting a collection of independent tuples:
    $ \text{Compile}(\phi_1 \wedge_{ex} \phi_2) = \text{Compile}(\phi_1) \cup
    \text{Compile}(\phi_2)$.
  \item \textit{Inclusive ($\wedge_{in}^{\mathbf{w}}$):} To preserve the
    independence of inclusive preference weights in the downstream QP
    solver, the compiler bypasses spatial aggregation for top-level
    inclusive conjunctions. Similar to exclusive conjunctions, it treats
    them as stacked constraints (set unions), unpacking the weight
    vector $\mathbf{w}$ and multiplying the respective parent weight
    $w_i$ by the existing weight component of each independent child
    tuple to establish a compounded absolute priority:
    $ \text{Compile}(\phi_1 \wedge_{in}^{\mathbf{w}} \phi_2) = \text{Compile}(\phi_1)
    \cup \text{Compile}(\phi_2) $.
  \end{itemize}

\paragraph{Active Tracking Set Partitioning:}
Let
$\mathcal{T} = \text{Compile}(\phi) = \{\mathcal{K}_1, \mathcal{K}_2, \dots, \mathcal{K}_M\}$ denote the set of
compiled constraint tuples for specification $\phi$, where each tuple
$\mathcal{K}_i = (\hat{\rho}_i, \text{Op}_i, \text{Mod}_i, \alpha_i, w_i)$ is indexed by
$i \in \{1, \dots, M\}$. Let $\mathcal{E}(q) \subseteq \mathcal{E}$ represent the set of active event
anchors bound to discrete mode $q \in \mathcal{Q}$ of the Hybrid Automaton
(Section~\ref{sec:ha-generation}).

For any active mode $q \in \mathcal{Q}$, the compiler partitions the active tuple
indices into four disjoint tracking sets:
{\footnotesize \begin{align*}
  \mathcal{A}_G^{ex}(q) &\triangleq \{ i \in \{1, \dots, M\} \mid 
  \text{Op}_i = G, \, \text{Mod}_i = ex, \, \alpha_i \in \mathcal{E}(q) \} \\
  \mathcal{A}_G^{in}(q) &\triangleq \{ i \in \{1, \dots, M\} \mid
  \text{Op}_i = G, \, \text{Mod}_i = in, \, \alpha_i \in \mathcal{E}(q) \} \\
  \mathcal{A}_F^{ex}(q) &\triangleq \{ i \in \{1, \dots, M\} \mid
  \text{Op}_i = F, \, \text{Mod}_i = ex, \, \alpha_i \in \mathcal{E}(q) \} \\
  \mathcal{A}_F^{in}(q) &\triangleq \{ i \in \{1, \dots, M\} \mid
  \text{Op}_i = F, \, \text{Mod}_i = in, \, \alpha_i \in \mathcal{E}(q) \}
\end{align*}
}

\item \textit{Disjunctions ($\vee_{ex}$ and $\vee_{in}^{\mathbf{w}}$):} Must
  be aggregated prior to the QP to bypass non-convex Mixed-Integer
  linear programming (MILP). Given two child tuples
  $\mathcal{K}_1 = (\hat{\rho}_1, \text{Op}_1, \text{Mod}_1, \alpha_1, w_1)$ and
  $\mathcal{K}_2 = (\hat{\rho}_2, \text{Op}_2, \text{Mod}_2, \alpha_2, w_2)$, which are
  guaranteed by the post-CNF validation step to have homogeneous
  operators ($\text{Op}_1 = \text{Op}_2$), the compiler
  fuses them into a single merged tuple \\
  $\mathcal{K}_{\text{merged}} = (\hat{\rho}_{\text{merged}},
  \text{Op}_{\text{merged}}, \text{Mod}_{\text{merged}},
  \alpha_{\text{merged}}, w_{\text{merged}})$ defined component-wise:
  {\small \begin{align*}
    \hat{\rho}_{\text{merged}} &= \begin{cases}
      \frac{1}{\kappa} \ln \left( e^{\kappa \hat{\rho}_1} + e^{\kappa \hat{\rho}_2} \right) - \frac{\ln 2}{\kappa}, \text{for } \vee_{ex} \\[4pt]
      \frac{1}{\kappa} \ln \left( \frac{w_1}{w_1+w_2} e^{\kappa \hat{\rho}_1} + \frac{w_2}{w_1+w_2} e^{\kappa \hat{\rho}_2} \right),\text{for } \vee_{in}^{\mathbf{w}}
    \end{cases} \\[4pt]
    \text{Op}_{\text{merged}} &= \text{Op}_1 \\[4pt]
    \text{Mod}_{\text{merged}} &= \begin{cases}
      ex, & \text{if } \text{Mod}_1 = \text{Mod}_2 = ex \\
      in, & \text{otherwise}
    \end{cases} \\[4pt]
    w_{\text{merged}} &= \begin{cases}
      \emptyset, & \text{if } \text{Mod}_{\text{merged}} = ex \\
      \min(w_1, w_2), & \text{if } \text{Mod}_{\text{merged}} = in
    \end{cases} \\[4pt]
    \alpha_{\text{merged}} &= \alpha_1 \lor \alpha_2 \quad (\text{active when } t \ge \min(t_{\alpha_1}, t_{\alpha_2}))
  \end{align*}
}\end{itemize}

\subsection{Global Temporal Control Constraint Mapping}
\label{sec:glob-temp-contr}

By parsing the compiled tuples, the compiler partitions the continuous spatial fields $\hat{\rho}$ into four active tracking sets ($\mathcal{A}_G^{ex}, \mathcal{A}_G^{in}, \mathcal{A}_F^{ex}, \mathcal{A}_F^{in}$). It maps each active field $i$ to its own affine differential control constraint on the system dynamics $\dot{x} = f(x) + \mathbf{u}$, assigning independent slack variables exactly where needed. 

\begin{itemize}
\item \textbf{Global Safety Constraints:} Maintains set invariance blindly over the statically contracted local surrogates. For each active safety field $\hat{\rho}_i$: %
  \begin{itemize}
  \item \textit{Exclusive ($i \in \mathcal{A}_G^{ex}$):} Hard Control Barrier
    Function (CBF) constraint: %
    $ \nabla \hat{\rho}_i \cdot \mathbf{u} \ge -L_f \hat{\rho}_i - c_h \max(\hat{\rho}_i,
    \epsilon)^{\beta_h} $
  \item \textit{Inclusive ($i \in \mathcal{A}_G^{in}$):} Relaxed CBF constraint
    stacked with an independent slack variable $\delta_{G, i} \ge 0$: %
    $ \nabla \hat{\rho}_i \cdot \mathbf{u} \ge -L_f \hat{\rho}_i - c_h \max(\hat{\rho}_i,
    \epsilon)^{\beta_h} - \delta_{G, i} $
  \end{itemize}

\item \textbf{Global Liveness Constraints:} Enforces finite-time
  convergence on the target field derived from the active liveness
  fields $\hat{\rho}_i$. The required decay rate $c_{V, i}$ is derived
  directly from Lemma~\ref{lemma:inversion} ensuring
  $V_i(x) \triangleq -\hat{\rho}_i$, where
  $c_{V, i} = \frac{V_{0, i}^{1-\beta_V}}{(1-\beta_V)b}$ and
  $V_{0, i} = \max(V_i(x(t_\alpha)), \epsilon)$ is sampled at the instant of
  operator activation $t_\alpha$.
  \begin{itemize}
  \item \textit{Exclusive ($i \in \mathcal{A}_F^{ex}$):} Hard Control Lyapunov
    Function (CLF) constraint: %
    $ \nabla V_i \cdot \mathbf{u} \le -L_f V_i - c_{V, i} \max(V_i, \epsilon)^{\beta_V} $
  \item \textit{Inclusive ($i \in \mathcal{A}_F^{in}$):} Relaxed CLF constraint
    stacked with an independent slack variable $\delta_{F, i} \ge 0$: %
    $ \nabla V_i \cdot \mathbf{u} \le -L_f V_i - c_{V, i} \max(V_i, \epsilon)^{\beta_V} +
    \delta_{F, i} $
  \end{itemize}
\end{itemize}

\subsection{Unified Control Synthesis Quadratic Program (QP)}
\label{sec:unif-contr-synth}

At each control step, all compiled constraints across active independent
event sets
($\mathcal{A}_G^{ex}, \mathcal{A}_G^{in}, \mathcal{A}_F^{ex}, \mathcal{A}_F^{in}$) are solved simultaneously via
Quadratic Programming (QP). The QP natively treats all stacked
constraints as an "AND". By heavily penalizing the completely
independent slack variables with their specific preference weights
($w_i$, $w_j$), the objective function instructs the solver to
prioritize strict safety thresholds while replicating "soft AND"
preferential behavior for inclusive constraints:

\begin{align*}
  \min_{\mathbf{u}, \boldsymbol{\delta}_G, \boldsymbol{\delta}_F} \quad J(\mathbf{u}) + \left( \sum_{i \in \mathcal{A}_G^{in}} w_i \delta_{G, i}^2 + \sum_{j \in \mathcal{A}_F^{in}} w_j \delta_{F, j}^2 \right) \\
  \nabla \rho^\kappa_i \cdot \mathbf{u} \ge -L_f \rho^\kappa_i - c_h \text{sgn}(\rho^\kappa_i)|\rho^\kappa_i|^{\beta_h}, \forall i \in \mathcal{A}_G^{ex} \\
\nabla \rho^\kappa_i \cdot \mathbf{u} \ge -L_f \rho^\kappa_i - c_h \text{sgn}(\rho^\kappa_i)|\rho^\kappa_i|^{\beta_h} - \delta_{G,i}, \forall i \in \mathcal{A}_G^{in} \\
\nabla V_j \cdot \mathbf{u} \le -L_f V_j - c_{V, j} \text{sgn}(V_j)|V_j|^{\beta_V}, \forall j \in \mathcal{A}_F^{ex} \\
\nabla V_j \cdot \mathbf{u} \le -L_f V_j - c_{V, j} \text{sgn}(V_j)|V_j|^{\beta_V} + \delta_{F,j},  \forall j \in \mathcal{A}_F^{in} \\
\delta_{G, i} \ge 0, \quad \delta_{F, j} \ge 0, \quad \mathbf{u} \in \mathcal{U} 
\end{align*}


\subsection{Hybrid Automaton Generation}
\label{sec:ha-generation}

The execution of the global event-switched semantics dictates that the
system's operational timeline is inherently tied to the physical state
space rather than a rigid clock. Consequently, the global specifications
parsed from the weSTL+ grammar natively compile into a time-independent
Hybrid Automaton (HA).

Let the HA be formally defined by the tuple
$\mathcal{H} = (\mathcal{Q}, \mathcal{X}, \mathcal{E}, \mathcal{T}, \mathcal{F})$, where:
\begin{itemize}
\item $\mathcal{Q}$ represents the finite set of modes (e.g., active mission
  phases).
\item $\mathcal{X} \subseteq \mathbb{R}^n$ is the continuous physical state space.
\item $\mathcal{E} = \{\alpha_1, \alpha_2, \dots\}$ represents the set of discrete event
  anchors extracted directly from the Level-2 global specification $\phi$.
\item $\mathcal{T}: \mathcal{Q} \times \mathcal{X} \to \mathcal{Q}$ defines mode transitions triggered at the exact
  instant a continuous state guard predicate crosses its threshold,
  $\nu_{\alpha}(x(t)) \ge 0$.
\item $\mathcal{F}$ maps each mode $q \in \mathcal{Q}$ to a closed-loop continuous flow
  $\dot{x}(t) = u_q^*(x(t))$, where $u_q^*(x(t))$ is synthesized via a
  time-independent Quadratic Program (QP) generated in
  Section~\ref{sec:unif-contr-synth}.
\end{itemize}

Because active operational constraints are strictly bound to currently
triggered events, the compiler constructs $\mathcal{F}$ by extracting the active
tracking sets
$(\mathcal{A}_G^{ex}(q), \mathcal{A}_G^{in}(q), \mathcal{A}_F^{ex}(q), \mathcal{A}_F^{in}(q))$ associated with
the active event anchors $\mathcal{E}(q)$ for each mode $q \in \mathcal{Q}$.

Crucially, this structural mapping formulates a phase-specific QP called
continuously within each mode $q$. When an event
$\alpha \in \mathcal{E}$ evaluates to true upon boundary crossing, the HA executes a
transition to a new mode. This transition prompts the control synthesis
engine to reconfigure the objective function and active constraint
matrices. By dynamically routing active subsets of exclusive safety
($G_{ex}$), inclusive safety ($G_{in}$), and liveness ($F_{ex}, F_{in}$)
constraints into the online QP solver, the architecture maintains
structural convexity while completely decoupling control synthesis from
a global clock.

\section{Compilation of the Motivating Example}
\label{sec:case-study}

To evaluate the proposed compilation framework, we synthesize an
event-driven control policy from our motivating example from
Section~\ref{sec:motiv-example-demons}. The system consists of a planar
robot with state $p = [x, y]^T \in \mathbb{R}^2$, governed by single-integrator
driftless dynamics $\dot{p} = \mathbf{u}$ and subject to hardware
control limits $\|\mathbf{u}\|_\infty \le u_{\max} = 3.0\ \mathtt{m/s}$. The
single integrator kinematic model is adopted from the foundational
paper~\cite{lindemann2018control}. The kinematic model is purposely kept
simple to study and highlight the affect of clock anomalies.

\subsection{Specification and Environment Definitions}
\label{sec:spec-envir-defin}

The operational environment comprises three static circular obstacles
($O_1, O_2, O_3$) and three dynamic moving hazards ($H_1, H_2, H_3$)
(Figure~\ref{fig:motivating_example}). The mission dictates navigating
from a start position to a search waypoint $W_1 = [3.5, 6.5]^T$, then
proceeding to an inspection waypoint $W_2 = [8.5, 7.5]^T$, and finally
returning to a Base at $[9.5, 1.5]^T$. Waypoint phase transitions
trigger when the robot comes within a spatial capture radius
$r_{\text{reach}} = 0.25$.

The spatial predicates, mapping the continuous state to the abstract
syntax tree (AST), are defined as follows:
\begin{align*}
  \mu_{\text{obs},i}(p) &= \|p - p_{O,i}\|^2 - r_{O,i}^2, \quad \forall i \in \{1, 2, 3\} \\
  \mu_{\text{dyn},j}(p, t) &= \|p - p_{H,j}(t)\|^2 - r_{H,j}^2, \quad \forall j \in \{1, 2, 3\} \\
  V_k(p) &= \|p - W_k\|^2, \quad \forall k \in \{1, 2, 3\}
\end{align*}

For notational brevity, we define the composite safety predicate
covering all obstacles and moving hazards as
$\psi_{\text{safe}} = \bigwedge_{i=1}^3 \mu_{\text{obs},i} \wedge \bigwedge_{j=1}^3
\mu_{\text{dyn},j}$.

Formally, the global mission specification $\Phi$ is constructed in
accordance with the Level 2 weSTL+ grammar as a top-level exclusive
conjunction ($\wedge_{ex}$). The specification enforces global safety by
tying the composite safety predicate to the initialization event anchor
$\alpha_0$. Because $\alpha_0$ triggers at $t=0$ and is never deactivated, safety
remains perpetually enforced at every timestep across the entire mission
duration. Within this global constraint, the temporal parameter
$b_s = 0.6$ serves as a tight continuous-time lookahead horizon rather
than a global mission clock. This allows the low-level controller to
anticipate and evade dynamic hazards within a short window. This
perpetual safety guarantee operates alongside prioritized liveness
phases penalized by specific preference weights
($w_1 = 10000, w_2 = 12000, w_3 = 15000$):
\begin{align}
\Phi = {} & G_{ex, [0, b_s]}^{\alpha_0} \big( \psi_{\text{safe}} \big) \;\wedge_{ex}\; F_{in, [0, T_1]}^{\alpha_1, w_1} \big( \text{Reach}(W_1) \big) \nonumber \\
       & \wedge_{ex}\; F_{in, [0, T_2]}^{\alpha_2, w_2} \big( \text{Reach}(W_2) \big) \;\wedge_{ex}\; F_{in, [0, T_3]}^{\alpha_3, w_3} \big( \text{Reach}(\text{Base}) \big)
         \label{eq:1}
\end{align}

\subsection{Event Anchors and Active Execution Phases}
\label{sec:event-anchors-active}


The execution of $\Phi$ is dynamically governed by a purely spatial Hybrid
Automaton relying on zero clocks or timers. A Quadratic Program (QP)
activated when a mode change occurs:

\begin{itemize}
    \item \textbf{$\alpha_0$ (Global Mission Start):} Triggered at $t=0$. Activates mandatory static and dynamic obstacle safety constraints.
    \item \textbf{$\alpha_1$ (Phase 1 Start):} Triggered at $t=0$. Activates
      inclusive liveness toward the search target $W_1$ using weight
      $w_1 = 10000$.
    \item \textbf{$\alpha_2$ (Phase 2 Start):} Triggered when
      $\|p - W_1\| \le r_{\text{reach}}$. Deactivates $\alpha_1$ and activates
      inclusive liveness toward the inspection target $W_2$ using weight
      $w_2 = 12000$.
    \item \textbf{$\alpha_3$ (Phase 3 Start):} Triggered when
      $\|p - W_2\| \le r_{\text{reach}}$. Deactivates $\alpha_2$ and activates
      inclusive liveness toward the Base using weight $w_3 = 15000$.
\end{itemize}

\begin{table}[h!]
  \centering
  \caption{Discrete Event Transitions and Active Operator Sets}
  \label{tab:event_phases}
  \scriptsize
  \begin{tabular}{|c|l|l|c|}
    \hline
    \textbf{Mode} $q$ & \textbf{Trigger Condition} & \textbf{Active Event Set } $\mathcal{E}(q)$ & \textbf{Active Weight} \\ \hline
    \textbf{Phase 1} & Initialization & $\{\alpha_0, \alpha_1\}$ & $w_1 = 10000$ \\ \hline
    \textbf{Phase 2} & $\|p - W_1\| \le 0.50$ & $\{\alpha_0, \alpha_2\}$ & $w_2 = 12000$ \\ \hline
    \textbf{Phase 3} & $\|p - W_2\| \le 0.50$ & $\{\alpha_0, \alpha_3\}$ & $w_3 = 15000$ \\ \hline
  \end{tabular}
\end{table}

\subsection{Compiled QP Formulations and Stacked Constraints}

\paragraph{Level 1 Surrogate Evaluation and Drift}
For the conjunction of local predicates under $\alpha_0$, the Level 1 rules
construct a smooth softmin surrogate ($\kappa = 3.0$,
Section~\ref{sec:local-stl-inductive}) aggregating all static and
dynamic boundaries. To maintain robust invariants against moving
objects, the compiler evaluates the dynamic hazard drift term
$L_{\mathrm{env}} \rho^{\kappa}_{G} = \sum_j w_j \left( -\nabla_p h_j \cdot v_{\text{haz},j}
\right)$. The safety boundary is spatially contracted using the
finite-time level-set inversion mapping $\mathcal{I}(c_h, \beta_h, b_s)$ (c.f.
Section~\ref{sec:global-event-level}).

\paragraph{Level 2 Top-Level Conjunction Routing}
Unlike local predicates, the top-level exclusive conjunction
($\wedge_{ex}$) is \textit{not} aggregated via a smooth softmin operator.
Instead, the compiler parses it as a routing directive, outputting a
collection of independent constraint tuples and partitioning their
spatial fields into active tracking sets (e.g., $\mathcal{A}_G^{ex}$,
$\mathcal{A}_F^{in}$) (Section~\ref{sec:global-event-level}). This allows the QP
solver to naturally handle global operations by mapping and stacking
these partitioned constraints as independent rows in its matrix

\paragraph{Unified Stacked QP per Phase}
During any active Phase $k \in \{1, 2, 3\}$, the compiler stacks the
exclusive global safety operator (a strict Control Barrier Function
forcing zero slack, $\delta_G = 0$) alongside the inclusive global liveness
operator (a relaxed Control Lyapunov Function with a penalized slack
variable, $\delta_F \ge 0$) using the user specified weights, resulting in the
QP below:

\begin{equation*}
\begin{aligned}
\min_{\mathbf{u}, \delta_F} \quad & \frac{1}{2} \|\mathbf{u}\|^2 + w_k \delta_F^2 \\
 & \nabla \rho^\kappa_G \cdot \mathbf{u} \ge - L_{\mathrm{env}} \rho^\kappa_G - c_h \text{sgn}(\rho^\kappa_G)|\rho^\kappa_G|^{\beta_h} \\
& \nabla V_k \cdot \mathbf{u} \le - c_{V,k} \text{sgn}(V_k)|V_k|^{\beta_V} + \delta_F \\
& \delta_F \ge 0, \quad \|\mathbf{u}\|_\infty \le u_{\max}
\end{aligned}
\end{equation*}


By decoupling the global operators into stacked independent constraints,
the solver dynamically enforces strict multi-hazard safety boundaries
while leveraging the weSTL+ weights $w_k$ to optimally prioritize
mission completion regardless of localized timing anomalies. The result
is shown in Figure~\ref{fig:motivating_example} --- the proposed
controller synthesis technique is the only one able to finish the
mission in the presence of clock anomalies.

\section{Analysis of Proposed Controller Synthesis}
\label{sec:theoretical-analysis}

In this section we give the soundness and completeness results for the
controller synthesis (Section~\ref{sec:compiler-rules}) with respect to
Boolean semantics (Section~\ref{sec:two-level-formal}) of weSTL+.

\begin{lemma}[Equivalence of Temporal Flattening]
\label{lem:flatten_equivalence}
For any top-level event-anchored temporal operator $O_{mod, [0,a]}^{\alpha, w'}$ (where $O \in \{G, F\}$, $mod \in \{ex, in\}$, event anchor $\alpha \in \mathcal{E}$, and weight $w'$) and any homogeneous inner unanchored local STL operator $O_{[0,b]}$, the flattening transformation preserves quantitative robustness for any local specification $\psi$:
\begin{equation*}
    \rho\left(O_{mod, [0,a]}^{\alpha, w'} (O_{[0,b]} \psi), \, x, \, t\right) = \rho\left(O_{mod, [0, a+b]}^{\alpha, w'} \psi, \, x, \, t\right)
\end{equation*}
\end{lemma}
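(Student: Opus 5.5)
The plan is to reduce the statement to a purely Level~1 identity by splitting on the event phase, exactly as in the proof of Theorem~\ref{thm:global_soundness}.

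\emph{Dormant phase.} If $t < t_\alpha$ or $t_\alpha = \infty$, the global robustness definition for event operators assigns $+\infty$ to both sides. These values depend only on the anchor $\alpha$ and not on the inner formula, so the two sides coincide trivially.

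\emph{Active phase.} If $t \ge t_\alpha$ and $t_\alpha < \infty$, both sides carry the same weight $w'$. Indeed, $w'=1$ for $ex$ and $w'=w$ for $in$, and flattening does not touch the modality or the weight. It therefore suffices to show
\[
\rho_{\text{loc}}\big(O_{[0,a]}(O_{[0,b]}\psi), x_\alpha, 0\big) = \rho_{\text{loc}}\big(O_{[0,a+b]}\psi, x_\alpha, 0\big),
\]
after which we multiply by $w'>0$. Note that this uses the weight only as a common factor; equality does not need sign-invariance.

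\emph{Level~1 identity.} Unfolding the local robustness semantics for $O=G$, the left side is
\[
\min_{\tau'\in[0,a]} \; \min_{\tau''\in[\tau',\tau'+b]} \rho_{\text{loc}}(\psi, x_\alpha, \tau'').
\]
As $\tau'$ ranges over $[0,a]$, the union of the windows $[\tau',\tau'+b]$ is exactly $[0,a+b]$. Both inclusions are immediate:
\begin{itemize}
\item each window lies in $[0,a+b]$;
\item any $s\in[0,a+b]$ lies in the window with $\tau'=\min(s,a)$, since then $\tau' \le s \le \tau'+b$.
\end{itemize}
An infimum over a union of sets equals the infimum over the union, so the nested min collapses to $\min_{\tau''\in[0,a+b]}\rho_{\text{loc}}(\psi,x_\alpha,\tau'')$. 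This is the right side. The case $O=F$ is identical with $\max$ in place of $\min$, or it follows by duality using $\rho_{\text{loc}}(\neg\psi)=-\rho_{\text{loc}}(\psi)$ and $F=\neg G\neg$ at the robustness level.

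\emph{Main obstacle.} The only delicate point is whether the $\min/\max$ in the local semantics are attained, i.e.\ whether they are genuine minima or only infima. I would handle this by reading them as $\inf/\sup$. The union argument is valid for $\inf/\sup$ without any continuity assumption, and the value $+\infty$ (e.g.\ from $\text{True}$) causes no difficulty in the extended reals. If attainment is wanted, it follows when $\rho_{\text{loc}}(\psi,x_\alpha,\cdot)$ is continuous, which holds for continuous $x$, $C^1$ predicates, and the min/max/shift constructions. The homogeneity hypothesis, meaning the inner operator is the same $O$ as the outer one, is essential here: for $G$ over $F$ the nested expression is a min of maxes, which does not collapse. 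I would remark that this is why the validation pass in Section~\ref{sec:overall-compilation-approach} rejects alternating nestings.
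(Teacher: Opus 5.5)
Your proof is correct and follows the paper's route: split into dormant and active phases, reduce the active case to the Level~1 identity $\rho_{\text{loc}}(O_{[0,a]}(O_{[0,b]}\psi),x_\alpha,0)=\rho_{\text{loc}}(O_{[0,a+b]}\psi,x_\alpha,0)$, then multiply by the common weight $w'$. The paper only asserts the Level~1 identity by appeal to a ``semigroup property'' of interval addition. Your union-of-windows argument, $\bigcup_{\tau'\in[0,a]}[\tau',\tau'+b]=[0,a+b]$ together with the inf-over-a-union step, actually proves that step.
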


\begin{proof}
  Let $O \in \{G, F\}$. The evaluation of the top-level event-anchored
  operator bifurcates based on the activation state of event anchor
  $\alpha$. During the dormant phase ($t < t_\alpha$ or
  $t_\alpha = \infty$), both specifications trivially evaluate to
  $+\infty$ by definition of the global robustness semantics.

  During the active phase ($t \ge t_\alpha$ and $t_\alpha < \infty$), global robustness
  reduces to scaling the Level 1 local robustness by $w'$. Because Level
  1 local specifications strictly adhere to standard continuous-time STL
  semantics~\cite{donze2010robust, maler2004monitoring}, nested
  homogeneous temporal operators intrinsically obey the semigroup
  property of continuous interval addition over $\mathbb{R}_{\ge 0}$. Therefore,
  the sequential composition mathematically reduces to a single operator
  spanning the summed horizon:
  \begin{equation*}
    \rho_{\text{loc}}\left(O_{[0,a]} (O_{[0,b]} \psi), \, x_\alpha, \, 0\right) = \rho_{\text{loc}}\left(O_{[0, a+b]} \psi, \, x_\alpha, \, 0\right)
  \end{equation*}
  Multiplying both sides by the preference weight $w'$ proves exact
  equivalence with the right-hand side of the lemma statement:
  $ w' \cdot \rho_{\text{loc}}\left(O_{[0, a+b]} \psi, \, x_\alpha, \, 0\right) =
  \rho\left(O_{mod, [0, a+b]}^{\alpha, w'} \psi, \, x, \, t\right)$
\end{proof}

\begin{lemma}[Algebraic Under-Approximation of Compiled Surrogates]
\label{lem:algebraic_under_approx}
Let $\phi$ be a formula in the supported STL fragment, and let $\rho(\phi, x)$ denote the exact spatial robustness of $\phi$ evaluated at state $x$. Let the compiler map $\phi$ to a set of surrogate tuples $\text{Compile}(\phi)$. For any operation in the compilation sequence, the resulting algebraic surrogate robustness strictly under-approximates the exact robustness: $\hat{\rho} \le \rho$. Consequently, enforcing $\hat{\rho} \ge 0$ structurally guarantees $\rho \ge 0$.
\end{lemma}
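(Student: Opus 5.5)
The proof goes by structural induction over the compilation pipeline, following the same grammar order as Theorem~\ref{thm:global_soundness}. At each compiler operation we show that the produced surrogate lies below the exact spatial robustness of the corresponding subformula. Monotonicity then propagates the bound upward.

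\textbf{Base cases.} For predicates the claim is immediate, since $\rho^\kappa(\mu\ge0)=\mu(x)=\rho$. For True both sides equal $+\infty$. Negation is the delicate case, because it reverses inequalities. We therefore assume, as Flatten and the CNF stage effectively ensure, that negations are pushed onto predicates, so that $\neg\mu$ is itself a predicate with exact surrogate $-\mu$.

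\textbf{Level~1 connectives.} For $\wedge$ we use the standard log-sum-exp bound
\[
-\tfrac{1}{\kappa}\ln\bigl(e^{-\kappa a}+e^{-\kappa b}\bigr)\le \min(a,b),
\]
which holds because the sum of exponentials is at least the larger term $e^{-\kappa\min(a,b)}$. For $\vee$ the shift by $\ln 2/\kappa$ is exactly what is needed:
\[
\tfrac{1}{\kappa}\ln\bigl(e^{\kappa a}+e^{\kappa b}\bigr)-\tfrac{\ln2}{\kappa}=\tfrac{1}{\kappa}\ln\Bigl(\tfrac{e^{\kappa a}+e^{\kappa b}}{2}\Bigr)\le \max(a,b).
\]
Both smooth operators are monotone nondecreasing in each argument. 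Hence the inductive hypotheses $\hat\rho_i\le\rho_i$ transfer through the composition, giving $\mathrm{smin}(\hat\rho_1,\hat\rho_2)\le \mathrm{smin}(\rho_1,\rho_2)\le\min(\rho_1,\rho_2)$, and similarly for the smooth max.

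\textbf{Level~2 operators.} For $G$, the surrogate $\rho^\kappa(\psi)-\mathcal{I}(c_h,\beta_h,b)$ sits below $\rho^\kappa(\psi)$ because $\mathcal{I}\ge 0$. For the temporal content we invoke Lemma~\ref{lemma:inversion}: whenever the enforced CBF gives $\dot z\ge -c_h z^{\beta_h}$, a nonnegative contracted field at activation keeps $\psi$'s field positive over $[0,b]$. This yields $\hat\rho\le\rho$ in the sense of the lemma: the spatial field at the anchor lower-bounds the window minimum.

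For $F$ the surrogate is $\rho^\kappa(\psi)$ evaluated at the current state. Since the current time lies in the window, it is trivially at most the window maximum.

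Exclusive and inclusive conjunctions are routed as a union of tuples, so each tuple retains its own bound. Enforcing all of them gives every conjunct nonnegative, which by Theorem~\ref{thm:global_soundness} is equivalent to $\rho\ge0$, since the weights preserve sign.

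For merged disjunctions, the $\vee_{ex}$ case reuses the shifted log-sum-exp bound. For $\vee_{in}^{\mathbf{w}}$ the weighted average $\frac{1}{\kappa}\ln\sum_i p_i e^{\kappa\hat\rho_i}$, with $p_i=w_i/(w_1+w_2)$, is at most $\max(\hat\rho_1,\hat\rho_2)$ by convexity and monotonicity.

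\textbf{Main obstacle.} The exclusive disjunction's exact semantics contains the term $-\min(\rho_1,\rho_2)$, which the merged surrogate does not model. The under-approximation therefore holds only against the inclusive-max component. I would handle this by proving $\hat\rho\le\max(\rho_1,\rho_2)$ and noting the restriction, or by treating the comparison as being against the enforced (max) part of the semantics. The temporal $G$ step is also only conditionally valid, since it relies on the CBF dissipation hypothesis of Lemma~\ref{lemma:inversion}, and I would state that dependence explicitly.

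The final claim then follows by chaining: $\hat\rho\ge0\Rightarrow\rho\ge0$.
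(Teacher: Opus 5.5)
Your induction is the paper's own: the same rule-by-rule structure, the same two log-sum-exp bounds, $\mathcal{I}\ge0$ for $G$, set-union routing for both conjunctions, and the $\max$-bound for the merged disjunctions. The two obstacles you isolate do not come from your route. They are defects of the statement itself, and the paper's proof does not resolve them either.

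At negation the paper simply declares $-\rho^\kappa(\psi)$ to be ``exact''. For compound $\psi$, however, the inductive bound $\rho^\kappa(\psi)\le\rho(\psi)$ reverses. Take $\psi=(\mu_1\ge0)\wedge(\mu_2\ge0)$ with $\mu_1=\mu_2=0.1$ and $\kappa=3$: then $\rho^\kappa(\neg\psi)=\frac{\ln 2}{3}-0.1\approx0.13>0$, while $\rho(\neg\psi)=-0.1$.

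At $\vee_{ex}$ the paper proves only $\hat\rho_{\text{merged}}\le\max(\hat\rho_1,\hat\rho_2)$, exactly as you do. Taking $\hat\rho_i=\rho_i=1$ gives $\hat\rho_{\text{merged}}=1$ against the exact value $\min(1,-1)=-1$. So even the sign implication fails, and in fact it fails whenever both children are positive. No proof can close these cases; the restricted statement you propose is the right target.

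Within that restricted statement, three of your steps still need repair.

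\textbf{(i) Negation normal form.} Flatten and CNF do not produce it. CNF acts on the Level-2 skeleton, while $\rho^\kappa(\neg\psi)=-\rho^\kappa(\psi)$ is applied to arbitrary Level-1 $\psi$. Pushing negations onto predicates must therefore be an explicit hypothesis or an added pass. Level-2 negation must be excluded outright: $\neg G^{\alpha}_{ex,[0,b]}\psi$ equals $-\infty$ while $\alpha$ is dormant, so it is not any anchored $F$ formula (those are $+\infty$ when dormant), and Compile has no rule for it anyway.

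\textbf{(ii) The $G$ step.} Your claim that the contracted field at the anchor lower-bounds the window minimum is false as a numerical inequality. The extremal solution of $\dot z=-c_h z^{\beta_h}$ from $z_0>\mathcal{I}>0$ gives $z(b)=(z_0^{1-\beta_h}-\mathcal{I}^{1-\beta_h})^{1/(1-\beta_h)}$. By strict superadditivity of $s\mapsto s^{1/(1-\beta_h)}$, this is strictly less than $z_0-\mathcal{I}$. Only the sign implication survives, and only under the dissipation hypothesis of Lemma~\ref{lemma:inversion}.

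\textbf{(iii) The $F$ step.} Your comparison with the window maximum is valid only while $t-t_\alpha\in[0,b]$.

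The paper sidesteps (ii) and (iii) by keeping the lemma pointwise. It compares $\hat\rho$ with the spatial robustness $\rho(\psi,x)$ of the inner formula at the current state, for which $\mathcal{I}\ge0$ alone suffices. All temporal content is deferred to Lemma~\ref{thm:controller_correctness} and Theorem~\ref{thm:global_satisfaction}. You should do the same rather than import dynamics into an algebraic lemma.

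Finally, for inclusive operators the exact value carries a factor $w$ that the compiled field lacks. With $w<1$, the numerical bound $\hat\rho\le\rho$ can therefore fail. Running the induction directly on the sign implication $\hat\rho\ge0\Rightarrow\rho\ge0$ absorbs the weights and is all the final claim requires.
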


\begin{proof}
The proof proceeds by structural induction on the compilation rules, verifying that every algebraic transformation and aggregation step strictly yields an under-approximation.

\noindent \textbf{1. Level-1 Inner Predicates ($\rho^\kappa \le \rho$):} \\
For the innermost boolean formulas, the true robustness uses exact
minimums and maximums, whereas the compiler uses smooth log-sum-exp
approximations ($\rho^\kappa$).
\begin{itemize}
\item \textit{Base Predicates:} By definition, $\rho^\kappa(\mu \ge 0) = \mu(x) = \rho_{\mathtt{loc}}(\mu \ge 0)$, and $\rho^\kappa(\text{True}) = \infty = \rho(\text{True})$.
\item \textit{Negation:} $\hat{\rho}(\neg \psi) = -\rho^\kappa(\psi)$, which is exact given the symmetric definition of spatial robustness for negations.
\item \textit{Conjunction ($\wedge$):} The exact robustness is $\min_{i} \rho(\psi_i)$. The surrogate utilizes the smooth minimum:
  \begin{align*}
    \hat{\rho}(\psi_1 \wedge \psi_2) = -\frac{1}{\kappa} \ln \left( \sum_{i=1}^2 \exp \big(-\kappa \rho^\kappa(\psi_i) \big) \right) \\ \le \min \big(\rho^\kappa(\psi_1), \rho^\kappa(\psi_2)\big)
  \end{align*}
  Since the smooth minimum strictly under-approximates the exact
  minimum for any $\kappa > 0$, $\hat{\rho} \le \rho(\psi_1 \wedge \psi_2)$.
\item \textit{Disjunction ($\vee$):} The exact robustness is $\max_{i} \rho(\psi_i)$. The surrogate utilizes a shifted smooth maximum:
  \begin{equation*}
    \hat{\rho}(\psi_1 \vee \psi_2) = \frac{1}{\kappa} \ln \left( \sum_{i=1}^2 \exp \big(\kappa \rho^\kappa(\psi_i) \big) \right) - \frac{\ln 2}{\kappa}
  \end{equation*}
  The standard log-sum-exp function upper-bounds the max by $\max(x_1, x_2) + \frac{\ln 2}{\kappa}$. Subtracting $\frac{\ln 2}{\kappa}$ ensures $\hat{\rho} \le \max \big(\rho^\kappa(\psi_1), \rho^\kappa(\psi_2)\big)$.
\end{itemize}

\noindent \textbf{2. Base Temporal Operators:} \\
The compiler extracts the spatial robustness $\hat{\rho}$ to form base tuples. For \textit{Eventually} ($F$), $\hat{\rho} = \rho^\kappa(\psi) \le \rho(\psi)$. For \textit{Always} ($G$), the surrogate is additionally penalized by the strictly non-negative invariant margin $\mathcal{I}(c_h, \beta_h, b) \ge 0$, as established by the finite-time level-set inversion in Lemma~\ref{lemma:inversion}:
\begin{equation*}
    \hat{\rho} = \rho^\kappa(\psi) - \mathcal{I}(c_h, \beta_h, b) \le \rho^\kappa(\psi) \le \rho(\psi)
\end{equation*}
Thus, the extraction of algebraic limits into base tuples preserves the
under-approximation.

\noindent \textbf{3. Top-Level Conjunctions ($\wedge_{ex}$ and $\wedge_{in}^{\mathbf{w}}$):} \\
The compiler processes top-level conjunctions by bypassing spatial
aggregation and mapping the operation to a set union:
$\text{Compile}(\phi_1 \circ \phi_2) = \text{Compile}(\phi_1) \cup \text{Compile}(\phi_2)$,
where $\circ \in \{\wedge_{ex}, \wedge_{in}^{\mathbf{w}}\}$. This structural choice
ensures the exact robustness requirements are safely under-approximated.
The exact robustness requires
$\min(w_1 \rho(\phi_1), w_2 \rho(\phi_2)) \ge 0$, where weights are
$w_1 = w_2 = 1$ for exclusive conjunctions, or derived from the parent
weight vector $\mathbf{w}$ for inclusive conjunctions. Because the QP
enforces all tuples in the set union independently, requiring both
$w_1 \hat{\rho}_1 \ge 0$ and $w_2 \hat{\rho}_2 \ge 0$ is algebraically equivalent
to requiring $\min(w_1 \hat{\rho}_1, w_2 \hat{\rho}_2) \ge 0$. Assuming by
induction that the child surrogates under-approximate the exact
robustness ($\hat{\rho}_i \le \rho(\phi_i)$), it follows that
$\min(w_1 \hat{\rho}_1, w_2 \hat{\rho}_2) \le \min(w_1 \rho(\phi_1), w_2 \rho(\phi_2))$,
proving that the stacked set union semantically preserves the
under-approximation.

\noindent \textbf{4. Top-Level Disjunctions ($\vee_{ex}$ and $\vee_{in}^{\mathbf{w}}$):} \\
To bypass non-convex MILP encodings, the compiler aggregates homogenous disjunctive tuples.
\begin{itemize}
    \item For $\vee_{ex}$: The merged robustness is structurally identical to the Level-1 disjunction. The negative shift $-\frac{\ln 2}{\kappa}$ guarantees $\hat{\rho}_{\text{merged}} \le \max(\hat{\rho}_1, \hat{\rho}_2)$.
    \item For $\vee_{in}^{\mathbf{w}}$: The compiler employs a convex combination of exponentials:
    \begin{equation*}
        \hat{\rho}_{\text{merged}} = \frac{1}{\kappa} \ln \left( \frac{w_1}{w_1+w_2} e^{\kappa \hat{\rho}_1} + \frac{w_2}{w_1+w_2} e^{\kappa \hat{\rho}_2} \right)
    \end{equation*}
    Let $M = \max(\hat{\rho}_1, \hat{\rho}_2)$. The normalized weights form a partition of unity (summing to $1$) and are strictly positive. Therefore, the convex combination of the exponentials is strictly upper-bounded by $1 \cdot e^{\kappa M}$. Thus:
    \begin{equation*}
        \hat{\rho}_{\text{merged}} \le \frac{1}{\kappa} \ln \big( e^{\kappa M} \big) = M = \max(\hat{\rho}_1, \hat{\rho}_2)
    \end{equation*}
\end{itemize}
In both modifier cases, the merged spatial robustness safely under-approximates the exact logical maximum.

\noindent \textbf{5. Flattening and CNF$'$ Rewrites:} \\
During preprocessing, the compiler flattens nested operators and
transforms the formula into CNF$'$. By
Lemma~\ref{lem:flatten_equivalence}, the temporal flattening
transformation preserves the exact quantitative robustness. Furthermore,
by De Morgan's laws and distributivity, the exact logical robustness
$\rho(\phi, x)$ is structurally invariant under these logical identities.
Because every individual operation mapping the AST to compiled
tuples---smooth min/max substitution, temporal margin penalization, tuple
unioning, and homogeneous merging---preserves $\hat{\rho} \le \rho$ (Steps 1--4),
the arbitrary composition of these operators during flattening is
guaranteed to maintain a sound algebraic under-approximation.
\end{proof}

\begin{lemma}[Soundness of Constraints: Safety and Liveness under Strict Feasibility]
\label{thm:controller_correctness}
Let $x(t)$ be the closed-loop state trajectory governed by the
state-dependent QP controller $u^*(x)$. For all constraints (both
exclusive and inclusive) where strict QP feasibility is maintained
(slack variables $\boldsymbol{\delta} = \mathbf{0}$):
\begin{itemize}
\item For every active safety operator ($G_{ex}, G_{in}$), the set
  $\mathcal{C}_i = \{x \mid \hat{\rho}_i(x) \ge 0\}$ is strictly forward-invariant under
  $u^*(x)$.
\item For every active liveness operator ($F_{ex}, F_{in}$), the
  trajectory $x(t)$ strictly enters the target set
  $\mathcal{R}_i = \{x \mid \hat{\rho}_i(x) \ge 0\}$ within the finite time bound $T \le b$.
\end{itemize}
\end{lemma}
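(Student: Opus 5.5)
The two claims can be handled separately. With the slacks set to zero, each inclusive constraint coincides with its exclusive counterpart, so I would treat $G_{ex}$ and $G_{in}$ together, and likewise $F_{ex}$ and $F_{in}$. Along the closed-loop flow $\dot{x} = f(x) + u^*(x)$ inside a fixed mode $q$, the QP row for field $i$ is exactly a differential inequality on the scalar $\hat{\rho}_i(x(t))$, because
\[
\tfrac{d}{dt}\hat{\rho}_i(x(t)) = L_f\hat{\rho}_i + \nabla\hat{\rho}_i\cdot u^* .
\]
I assume the QP solution $u^*(x)$ is locally Lipschitz (or at least yields absolutely continuous Carathéodory solutions), so this chain rule holds almost everywhere. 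Since $\hat{\rho}_i$ is $C^1$ and free of time, this reduction involves no clock. I take the mode fixed during each argument and note that the tracking sets only change at guard crossings.

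\textbf{Safety.} With $\delta_{G,i}=0$, the active constraint gives
\[
\dot{\hat{\rho}}_i \ge -c_h\,\mathrm{sgn}(\hat{\rho}_i)|\hat{\rho}_i|^{\beta_h}.
\]
Define $\gamma(s) = c_h\,\mathrm{sgn}(s)|s|^{\beta_h}$. This $\gamma$ is continuous, strictly increasing, and satisfies $\gamma(0)=0$, so it is an extended class-$\mathcal{K}$ function. I would invoke the standard Nagumo / comparison-lemma argument for CBFs. Suppose $x(t_1)\in\partial\mathcal{C}_i$ and the trajectory tried to leave. At the boundary $\dot{\hat{\rho}}_i \ge 0$. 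To finish cleanly, I would compare with the scalar ODE $\dot{y} = -\gamma(y)$, $y(t_0)=\hat{\rho}_i(x(t_0)) \ge 0$. The solution $y \equiv 0$ from zero, and the monotonicity of $\gamma$, give $\hat{\rho}_i(x(t)) \ge y(t) \ge 0$. The non-Lipschitz point of $|s|^{\beta_h}$ at $0$ is harmless here, because the comparison only needs the minimal solution, which stays at $0$. A remark would connect this to Lemma~\ref{lemma:inversion}: since $\hat{\rho}_i = \rho^\kappa - \mathcal{I}$, invariance of $\mathcal{C}_i$ means $\rho^\kappa$ stays above the inversion margin. The paper's $\max(\cdot,\epsilon)$ variant also works, since it only strengthens the inequality near zero.

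\textbf{Liveness.} Set $V_i = -\hat{\rho}_i$ and assume $V_i(x(t_\alpha)) > 0$; otherwise the trajectory is already in $\mathcal{R}_i$ at $\tau = 0$. With $\delta_{F,i}=0$, the constraint gives
\[
\dot{V}_i \le -c_{V,i}V_i^{\beta_V}
\]
while $V_i>0$. This is exactly the hypothesis of Lemma~\ref{lemma:inversion} (the upper-bound case) with $z = V_i$, $c = c_{V,i}$, $\beta = \beta_V$. Integrating by separation of variables shows that the first hitting time of $V_i=0$ satisfies
\[
T \le \frac{V_{0,i}^{1-\beta_V}}{c_{V,i}(1-\beta_V)}.
\]
Substituting the compiled gain $c_{V,i} = V_{0,i}^{1-\beta_V}/((1-\beta_V)b)$ gives $T\le b$. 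Equivalently, $V_{0,i} = \mathcal{I}(c_{V,i},\beta_V,b)$, so the inversion condition holds with equality.

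\textbf{Main obstacle.} The hard part is the $\epsilon$-floor together with the global well-posedness. The floor is $V_{0,i}=\max(V_i(x(t_\alpha)),\epsilon)$, with $\max(V_i,\epsilon)$ inside the constraint. For the floor I would argue that $V_{0,i}\ge V_i(x(t_\alpha))$ only enlarges $c_{V,i}$, and that $\max(V_i,\epsilon)^{\beta_V}\ge V_i^{\beta_V}$ only strengthens the decay, so the comparison bound still holds. For well-posedness, I would explicitly assume that the mode, and hence the active row, remains unchanged on $[t_\alpha, t_\alpha+T]$, and that the QP remains feasible with $u^*\in\mathcal{U}$. This feasibility assumption is precisely the lemma's hypothesis "strict feasibility with $\boldsymbol{\delta}=\mathbf{0}$". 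I would state it as a standing assumption rather than prove it.
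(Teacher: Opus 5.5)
Your route is essentially the paper's. You reduce each zero-slack QP row to a scalar differential inequality on $\hat{\rho}_i$ by the chain rule, use a Nagumo/comparison argument for the $G$ rows, and use the separation-of-variables bound of Lemma~\ref{lemma:inversion} (Bhat--Bernstein) for the $F$ rows. Your liveness argument is correct and more careful than the paper's. Substituting $c_{V,i}=V_{0,i}^{1-\beta_V}/((1-\beta_V)b)$ with $V_{0,i}\ge V_i(x(t_\alpha))$ gives $T\le b$. On the CLF side the floor $\max(V_i,\epsilon)^{\beta_V}\ge V_i^{\beta_V}$ really does tighten the bound on $\dot V_i$. Your comparison argument for the $\mathrm{sgn}(\cdot)|\cdot|^{\beta_h}$ form of the CBF row is also correct, and the monotonicity of $\gamma$ is exactly what it needs.

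The one step that is wrong is your remark that the paper's $\max(\cdot,\epsilon)$ variant of the safety row ``also works, since it only strengthens the inequality near zero.'' On the CBF side the orientation is reversed. For every $\hat{\rho}_i<\epsilon$,
\[
-c_h\max(\hat{\rho}_i,\epsilon)^{\beta_h}=-c_h\epsilon^{\beta_h}<-c_h\,\mathrm{sgn}(\hat{\rho}_i)|\hat{\rho}_i|^{\beta_h}.
\]
So the floored row is \emph{weaker}. On $\partial\mathcal{C}_i$, and below it, it only enforces $\dot{\hat{\rho}}_i\ge -c_h\epsilon^{\beta_h}<0$, which is not a Nagumo condition.

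Invariance then actually fails. Take $\dot x=u$ with $\hat{\rho}_i(x)=x$, together with an active zero-slack CLF row $u\le -c_V\max(x+1,\epsilon)^{\beta_V}$ where $c_V<c_h\epsilon^{\beta_h}$ (for example, from a long horizon $b$). The QP stays feasible with $u<0$ at and below $x=0$, so the trajectory leaves $\mathcal{C}_i$ with all slacks zero.

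The safety half of the lemma therefore holds only for the sgn-form QP of the unified QP section. That is the form your main argument uses, so drop the remark, or restrict it to the CLF side. The paper's own proof has the same blind spot: it writes the floored CBF row and then silently replaces $\max(\hat{\rho}_i,\epsilon)^{\beta_h}$ by $\hat{\rho}_i^{\beta_h}$.
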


\begin{proof}
  For all active safety constraints
  ($i \in \mathcal{A}_G^{ex} \cup \mathcal{A}_G^{in}$), the QP imposes the Control Barrier
  Function (CBF) constraint
  $\nabla \hat{\rho}_i \cdot \mathbf{u} \ge -L_f \hat{\rho}_i - c_h \max(\hat{\rho}_i,
  \epsilon)^{\beta_h} - \delta_{G, i}$ (where $\delta_{G, i}$ is naturally absent for
  exclusive constraints). Because strict feasibility dictates
  $\delta_{G, i} = 0$, this constraint actively enforces the fractional power
  law lower bound on the derivative,
  $\dot{\hat{\rho}}_i \ge -c_h \hat{\rho}_i^{\beta_h}$. By Nagumo's Theorem, because
  the system's controlled vector field strictly obeys this derivative
  condition at the boundary $\hat{\rho}_i = 0$, the trajectory cannot cross
  it, rendering the set $\mathcal{C}_i$ formally forward-invariant.

  For all active liveness constraints
  ($i \in \mathcal{A}_F^{ex} \cup \mathcal{A}_F^{in}$), we define the candidate Control Lyapunov
  Function (CLF) as $V_i(x) \triangleq -\hat{\rho}_i(x)$. The target region
  $\mathcal{R}_i$ is equivalent to the subzero level set $V_i(x) \le 0$. The QP
  enforces the CLF constraint
  $\nabla V_i \cdot \mathbf{u} \le -L_f V_i - c_{V, i} \max(V_i, \epsilon)^{\beta_V} + \delta_{F,
    i}$ (where $\delta_{F, i}$ is absent for exclusive constraints). Assuming
  strict feasibility with $\delta_{F, i} = 0$, this implies
  $\dot{V}_i \le -c_{V, i} V_i^{\beta_V}$. By invoking finite-time level-set
  inversion (Lemma~\ref{lemma:inversion}), the decay rate
  $c_{V, i} = \frac{V_{0, i}^{1-\beta_V}}{(1-\beta_V)b}$ is calibrated to
  satisfy the temporal specification exactly. Bhat-Bernstein stability
  guarantees the trajectory reaches $V_i(x) \le 0$ (and thus enters
  $\mathcal{R}_i$) in finite time $T \le b$.
\end{proof}

\begin{lemma}[Bounded Degradation of Inclusive Constraints: Soft Preferences]
\label{thm:inclusive_degradation}
Let $x(t)$ be the closed-loop state trajectory under $u^*(x)$. For
inclusive constraints ($G_{in}, F_{in}$) penalized by preference weights
$w_i$, the continuous relaxation via bounded slack variables
($\delta_{G,i} \ge 0, \delta_{F,i} \ge 0$) guarantees that deviations from strict
logical satisfaction remain bounded. Specifically:
\begin{itemize}
\item \textbf{Safety Degradation:} Maximum spatial boundary violation
  (penetration depth) is strictly bounded proportionally to the maximum
  safety slack:
  $\sup_{t} |\min(0, \hat{\rho}_i(x(t)))| \le \left( \frac{\max \delta_{G,
        i}}{c_h} \right)^{1/\beta_h}$.
\item \textbf{Liveness Degradation:} Convergence degrades to a bounded
  residual neighborhood around the target set, with maximum spatial
  offset bounded by:
  $\limsup_{t \to b} V_i(x(t)) \le \left( \frac{\max \delta_{F, i}}{c_{V, i}}
  \right)^{1/\beta_V}$.
\end{itemize}
\end{lemma}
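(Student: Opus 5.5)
We prove the two bullets separately, each by a comparison (Lyapunov/barrier-style) argument on the scalar fields $\hat{\rho}_i$ and $V_i=-\hat{\rho}_i$. The relaxed inequalities are read along the closed-loop flow. We set $\bar\delta_{G,i}\triangleq\sup_t \delta_{G,i}(t)$ and $\bar\delta_{F,i}\triangleq\sup_t\delta_{F,i}(t)$, and assume these suprema are finite. That boundedness is the standing hypothesis ``bounded slack variables'' in the statement. We use the signed-power form $\text{sgn}(\cdot)|\cdot|^{\beta}$ of the constraints from the unified QP (Section~\ref{sec:unif-contr-synth}), because it is the version that remains meaningful on the violating side.

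\textbf{Safety.} Along the closed loop, the relaxed CBF row gives $\dot{\hat{\rho}}_i \ge -c_h\,\text{sgn}(\hat{\rho}_i)|\hat{\rho}_i|^{\beta_h} - \bar\delta_{G,i}$. Let $e(t)\triangleq|\min(0,\hat{\rho}_i(x(t)))|$ be the penetration depth and $e^\star\triangleq(\bar\delta_{G,i}/c_h)^{1/\beta_h}$. On $\{\hat{\rho}_i<0\}$ we have $\dot e=-\dot{\hat{\rho}}_i\le -c_h e^{\beta_h}+\bar\delta_{G,i}$. The right-hand side is strictly negative whenever $e>e^\star$. This makes $\{e\le e^\star\}$ forward invariant, by a Nagumo argument identical to the one in Lemma~\ref{thm:controller_correctness} applied to the shifted barrier $\hat{\rho}_i+e^\star$. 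Whenever $\hat{\rho}_i\ge0$ we have $e=0$, so the claimed bound on $\sup_t e(t)$ follows, provided $\hat{\rho}_i(x(t_\alpha))\ge -e^\star$ at activation. If the initial state is safe, this holds trivially; we will state it as the natural initialization assumption. The conclusion is that the slack acts as a constant disturbance and only shifts the equilibrium of the comparison ODE $\dot e=-c_he^{\beta_h}+\bar\delta$.

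\textbf{Liveness.} This is the symmetric argument. The relaxed CLF row gives $\dot V_i\le -c_{V,i}\,\text{sgn}(V_i)|V_i|^{\beta_V}+\bar\delta_{F,i}$. Set $V^\star\triangleq(\bar\delta_{F,i}/c_{V,i})^{1/\beta_V}$; for $V_i>V^\star$ the right-hand side is negative. We then split into two cases. If $V_i(x(t_\alpha))\le V^\star$, invariance of $\{V_i\le V^\star\}$ (same argument as above) gives the bound for all $t$. Otherwise, $V_i$ decreases strictly and monotonically while above $V^\star$. To obtain the bound in the statement's form, we would either interpret $\limsup_{t\to b}$ as the asymptotic (ultimate) bound of the comparison system and cite standard ultimate-boundedness results, or we would split the dissipation as $-(1-\theta)c_{V,i}V^{\beta_V}-\theta c_{V,i}V^{\beta_V}+\bar\delta$. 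Applying Lemma~\ref{lemma:inversion} to the first part would give finite-time entry into $\{V_i\le(\bar\delta/(\theta c_{V,i}))^{1/\beta_V}\}$, and letting $\theta\to1$ recovers $V^\star$.

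The main obstacle is this last step. The finite-horizon quantity $\limsup_{t\to b}$ is not controlled by $V^\star$ alone: $c_{V,i}$ was calibrated in Section~\ref{sec:glob-temp-contr} so that the \emph{unperturbed} flow reaches zero exactly at $b$, and the perturbation slows that arrival. We would handle this by a comparison lemma, bounding $V_i(t)\le y(t)$, where $y$ solves $\dot y=-c_{V,i}y^{\beta_V}+\bar\delta$ with $y(t_\alpha)=V_{0,i}$. We would then show $y$ enters $[0,V^\star]$ in the limit, and where necessary state the result as an ultimate bound, possibly strengthening the hypothesis to a horizon $b$ that is long relative to the perturbed settling time. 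Everything else in the plan reduces to the two scalar differential inequalities above together with Lemma~\ref{lemma:inversion}.
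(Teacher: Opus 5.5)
Your safety half is the paper's argument. The paper also passes to the penetration depth $e=-\hat{\rho}_i$, derives $\dot e\le -c_h e^{\beta_h}+\delta_{G,i}$ from the signed-power CBF row, and clamps $e$ at the equilibrium $(\delta_{G,i}/c_h)^{1/\beta_h}$. Your version is more careful in two ways. You make the invariance of $\{e\le e^\star\}$ explicit, and you record the activation condition $\hat{\rho}_i(x(t_\alpha))\ge -e^\star$, which the paper leaves implicit. Your choice of the signed form is also essential, not cosmetic: the $\max(\hat{\rho}_i,\epsilon)^{\beta_h}$ form of Section~\ref{sec:glob-temp-contr} has no restoring term on the violating side.

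The liveness half has a genuine gap, and neither of your repairs closes it.
\begin{itemize}
\item \emph{The claim is pointwise.} Since $V_i(x(\cdot))$ is continuous, $\limsup_{t\to b}V_i$ is just the value at local time $b$. The bullet is therefore a finite-time claim, and reading it as an ultimate bound changes the statement rather than proving it.
\item \emph{The comparison solution is a counterexample to what you are using.} The relaxed CLF inequality is all that either you or the paper uses. Your comparison solution $y$ of $\dot y=-c_{V,i}y^{\beta_V}+\bar\delta_{F,i}$, $y(0)=V_{0,i}$, satisfies it with equality and constant slack $\bar\delta_{F,i}>0$. The right-hand side is Lipschitz near the equilibrium $V^\star>0$, so uniqueness gives $y(\tau)>V^\star$ for every finite $\tau$ whenever $V_{0,i}>V^\star$. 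Hence $V_i(x(t_\alpha+b))\le V^\star$ does not follow from the inequality.
\item \emph{The $\theta$-splitting confirms this quantitatively.} Above $V^\star_\theta=\theta^{-1/\beta_V}V^\star$ you get $\dot V_i\le-(1-\theta)c_{V,i}V_i^{\beta_V}$. With $c_{V,i}=V_{0,i}^{1-\beta_V}/((1-\beta_V)b)$, Lemma~\ref{lemma:inversion} only guarantees entry into $\{V_i\le V^\star_\theta\}$ by local time $b/(1-\theta)$, which diverges as $\theta\to1$.
\item \emph{Lengthening $b$ does not help.} $c_{V,i}$ rescales with $b$, and the approach to $V^\star$ stays asymptotic.
\end{itemize}

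The paper's own proof has the same hole: it sets $\dot V_i=0$ and simply asserts that the residual set is entered ``within time $b$''. What can actually be proved is one of the following:
\begin{itemize}
\item the ultimate bound $\limsup_{t\to\infty}V_i\le V^\star$ while the mode stays active; or
\item a finite-time version with an inflated radius and horizon, e.g.\ $V_i\le\theta^{-1/\beta_V}V^\star$ for all local $\tau\ge b/(1-\theta)$.
\end{itemize}
You should prove one of these and state explicitly that the $t\to b$ form is not established.
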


\begin{proof}
Inclusive constraints dynamically balance strict enforcement against feasibility by relaxing the differential requirements via slack variables, which are minimized based on preference weights $w_i$.

\textbf{Safety Bounded Degradation:} For inclusive safety
($i \in \mathcal{A}_G^{in}$), the QP enforces
$\dot{\hat{\rho}}_i \ge -c_h \text{sgn}(\hat{\rho}_i)|\hat{\rho}_i|^{\beta_h} - \delta_{G, i}$. If the
trajectory is forced outside the safe set ($\hat{\rho}_i < 0$), we analyze
the depth of penetration by defining the positive error state
$e(x) = -\hat{\rho}_i(x) > 0$. The relaxed CBF constraint dictates the
dynamics of this error as
$\dot{e} \le -c_h e^{\beta_h} + \delta_{G, i}$. The error $e(x)$ will continue to
grow only as long as $\dot{e} > 0$, which implies
$c_h e^{\beta_h} < \delta_{G, i}$. Therefore, the state cannot penetrate deeper
than the equilibrium point of this differential inequality. The system
maintains an ultimate bounded invariant set where maximum penetration is
clamped at $e_{\max} = (\delta_{G, i} / c_h)^{1/\beta_h}$. The heavier the
preference weight $w_i$ assigned in the QP, the smaller $\delta_{G, i}$
becomes, strictly limiting the spatial violation.

\textbf{Liveness Bounded Degradation:} For inclusive liveness
($i \in \mathcal{A}_F^{in}$), the relaxed CLF condition yields the fractional
differential inequality
$\dot{V}_i \le -c_{V, i} \text{sgn}(V_i)|V_i|^{\beta_V} + \delta_{F, i}$. Instead of strict
finite-time convergence to $V_i = 0$, the presence of
$\delta_{F, i} > 0$ introduces a competing positive drift. Convergence
towards the target set is maintained as long as the state is far enough
away such that $-c_{V, i} \text{sgn}(V_i)|V_i|^{\beta_V} + \delta_{F, i} < 0$. By setting
$\dot{V}_i = 0$, we find the ultimate bounded residual set into which
the trajectory converges within time $b$. The trajectory is
mathematically guaranteed to enter and remain within the relaxed spatial
neighborhood $V_i \le (\delta_{F, i} / c_{V, i})^{1/\beta_V}$. Because
$c_{V, i}$ scales inversely with the temporal window $b$, any required
compromise dynamically resolves into a quantifiable spatial distance
from the exact target, scaled entirely by the optimized slack
$\delta_{F, i}$.
\end{proof}

\begin{theorem}[Global Soundness under Unrelaxed Execution]
\label{thm:global_satisfaction}
Assume $\boldsymbol{\delta}_G(t) = \mathbf{0}$ and
$\boldsymbol{\delta}_F(t) = \mathbf{0}$ for all $t \ge 0$. Combining
Lemma~\ref{lem:algebraic_under_approx} (Compiler Soundness) and
Lemma~\ref{thm:controller_correctness} (Controller Correctness) with
Theorem~\ref{thm:global_soundness} (Global Soundness) establishes that:
\begin{align*}
  \text{Feasible QP Execution with } \boldsymbol{\delta}(t) = \mathbf{0} \\
  \implies \hat{\rho}(x(t)) \ge 0 \implies \rho(\phi, x, t) \ge 0 \iff (x, t) \models \phi
\end{align*}
\end{theorem}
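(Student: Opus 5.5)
The plan is to chain the three earlier results along the implication arrows in the statement, working one tuple at a time. Fix $\phi$ and run the compilation pipeline. By Lemma~\ref{lem:flatten_equivalence}, flattening preserves the exact robustness $\rho$. The CNF rewrite changes only the Boolean skeleton, and by Theorem~\ref{thm:global_soundness} Boolean equivalence is what ultimately matters. So we may replace $\phi$ by its flattened CNF form, whose top-level conjunction compiles to the tuple set $\mathcal{T}=\{\mathcal{K}_1,\dots,\mathcal{K}_M\}$.

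\textbf{Step 1: feasible zero-slack QP implies $\hat{\rho}_i(x(t))\ge 0$.} Fix a time $t$ and split on whether each tuple's anchor $\alpha_i$ has fired.
\begin{itemize}
\item \emph{Dormant tuples} ($t<t_{\alpha_i}$ or $t_{\alpha_i}=\infty$). The global semantics give robustness $+\infty$ for the corresponding subformula, so no spatial claim is needed.
\item \emph{Active safety tuples} ($i\in\mathcal{A}_G^{ex}(q)\cup\mathcal{A}_G^{in}(q)$). With $\boldsymbol{\delta}_G\equiv 0$, Lemma~\ref{thm:controller_correctness} gives forward invariance of $\mathcal{C}_i$. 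Hence $\hat{\rho}_i(x(t))\ge 0$ for all $t\ge t_{\alpha_i}$, provided it holds at activation.
\item \emph{Active liveness tuples.} With $\boldsymbol{\delta}_F\equiv 0$, Lemma~\ref{thm:controller_correctness} gives entry into $\mathcal{R}_i$ at some $\tau^*\le b$ after $t_{\alpha_i}$. Here the claim should be read in the temporal sense: $\hat{\rho}_i(x_{\alpha_i}(\tau^*))\ge 0$ for some $\tau^*\in[0,b]$, rather than pointwise at $t$.
\end{itemize}

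\textbf{Step 2: $\hat{\rho}\ge 0$ implies $\rho(\phi,x,t)\ge 0$.}
\begin{itemize}
\item \emph{Liveness tuples.} Combining entry into $\mathcal{R}_i$ with $\rho^\kappa\le\rho_{\text{loc}}$ from Lemma~\ref{lem:algebraic_under_approx} gives $\max_{\tau'\in[0,b]}\rho_{\text{loc}}(\psi,x_\alpha,\tau')\ge 0$. That is, $\rho_{\text{loc}}(F_{[0,b]}\psi,x_\alpha,0)\ge 0$, and multiplying by $w'>0$ preserves the sign.
\item \emph{Safety tuples.} Invariance of $\{\rho^\kappa(\psi)-\mathcal{I}\ge 0\}$ gives $\rho^\kappa(\psi)\ge\mathcal{I}\ge 0$ along the whole active interval, hence on $[0,b]$. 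Therefore $\min_{\tau'}\rho_{\text{loc}}(\psi,x_\alpha,\tau')\ge 0$.
\item \emph{Disjunctive clauses.} Merged tuples under-approximate the max by Lemma~\ref{lem:algebraic_under_approx}, step 4. So $\hat{\rho}_{\text{merged}}\ge 0$ forces at least one child to have nonnegative robustness.
\item \emph{Top-level conjunction.} Stacking is equivalent to the $\min$ being nonnegative, by step 3 of the same lemma, which yields $\rho(\phi,x,t)\ge 0$.
\end{itemize}

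\textbf{Step 3: $\rho(\phi,x,t)\ge 0 \iff (x,t)\models\phi$.} This is exactly Theorem~\ref{thm:global_soundness}.

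\textbf{Main obstacle.} The hardest part is making Step 1 rigorous. There are three gaps to close.
\begin{itemize}
\item \emph{Initial condition for safety.} Lemma~\ref{thm:controller_correctness} gives invariance only, so I must assume $\hat{\rho}_i(x(t_{\alpha_i}))\ge 0$ at activation. I would add this as an explicit hypothesis, or argue it is implied by recursive QP feasibility at the switching instant.
\item \emph{Mode switching.} Hybrid automaton switches must not reset or violate active $G$ constraints. Since $\alpha_0$-anchored safety tuples stay in $\mathcal{E}(q)$ for every mode, invariance carries across transitions.
\item \emph{Liveness horizon and $\epsilon$-clipping.} The rate $c_{V,i}$ is calibrated from $V_{0,i}$ sampled at $t_{\alpha_i}$, and must stay fixed thereafter for the horizon $b$ to hold. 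I would check this and then appeal to Lemma~\ref{lemma:inversion}. The $\epsilon$-clipping in $\max(V_i,\epsilon)^{\beta_V}$ only accelerates decay near the target, so it does not hurt the bound.
\end{itemize}

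A further subtlety is the exclusive disjunction. A $\vee_{ex}$ clause also requires that the other disjunct fail, and the smooth merge does not enforce this. I would either restrict the theorem to disjunctions that are mutually exclusive by construction, or note that the implication holds for the inclusive reading and flag the exclusivity part as an additional assumption.
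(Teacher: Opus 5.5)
Your argument follows the paper's proof. That proof is just the chain Lemma~\ref{thm:controller_correctness} $\Rightarrow \hat\rho\ge 0$, Lemma~\ref{lem:algebraic_under_approx} $\Rightarrow \rho\ge 0$, Theorem~\ref{thm:global_soundness} $\Rightarrow (x,t)\models\phi$, asserted in a few sentences. The caveats you raise are genuine holes the paper passes over: an initial condition for $G$ tuples, reading the liveness link over the window $[t_\alpha,t_\alpha+b]$ rather than pointwise at $t$, and the unenforced exclusivity half of $\vee_{ex}$.

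\textbf{The step that fails.} One step you assert without caveat does not hold. It is the disjunctive-clause bullet of Step~2 when the clause consists of $G$ operators, which post-CNF validation accepts and the compiler merges. Invariance of $\{\hat\rho_{\text{merged}}\ge 0\}$ together with $\hat\rho_{\text{merged}}\le\max(\hat\rho_1,\hat\rho_2)$ gives only that for each $\tau\in[0,b]$ \emph{some} $i$ has $\rho_{\text{loc}}(\psi_i,x_\alpha,\tau)\ge 0$, i.e.\ $\min_\tau\max_i\ge 0$. But $\rho(G^{\alpha}\psi_1\vee G^{\alpha}\psi_2)\ge 0$ requires $\max_i\min_\tau\ge 0$: one disjunct valid over the whole window. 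The merge therefore enforces only the strictly weaker $G^{\alpha}(\psi_1\vee\psi_2)$.

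For example, take one dimension with $\mu_1=1-x$ and $\mu_2=x$, equal weights and $\mathcal{I}<\tfrac12$. A run from $x=-1$ to $x=2$ inside the window keeps $\hat\rho_{\text{merged}}\ge\tfrac12-\mathcal{I}>0$ by Jensen's inequality, yet both $G\psi_1$ and $G\psi_2$ fail. Such a run is produced by a feasible zero-slack QP when a conjoined liveness tuple pulls the state slowly enough to $x\ge 2$. Lemma~\ref{lem:algebraic_under_approx} cannot repair this, because its bound $\hat\rho\le\rho$ is pointwise in the state and never sees the temporal quantifier.

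So the statement is false for such clauses, and $G$-disjunctions must be excluded by hypothesis, exactly as you propose for $\vee_{ex}$. Your bullet is fine for $F$-clauses, since $F(\psi_1\vee\psi_2)\equiv F\psi_1\vee F\psi_2$. That holds only when the children share anchor and horizon, however; the merged tuple keeps the single anchor $\alpha_1\lor\alpha_2$ and no horizon at all.

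\textbf{Weaker repairs.} Several of your repairs are weaker than stated.
\begin{itemize}
\item Recursive QP feasibility cannot supply the initial condition. The $\text{sgn}(\hat\rho_i)|\hat\rho_i|^{\beta_h}$ CBF row stays feasible when $\hat\rho_i<0$; it merely forces $\hat\rho_i$ upward. So $\hat\rho_i(x(t_{\alpha_i}))\ge 0$ must be an explicit hypothesis.
\item Your mode-switching argument relies on the case study's permanently active $\alpha_0$. In general you must assume that the automaton keeps each activated $G$ tuple in $\mathcal{E}(q)$ for its whole horizon, and each $F$ tuple until its target is entered. The construction of Section~\ref{sec:ha-generation} does not guarantee this; the case study itself deactivates $\alpha_1$ when $\alpha_2$ fires.
\item Lemma~\ref{lem:flatten_equivalence} covers only the nested-operator rule. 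The $F$-over-$\vee$ distribution in Flatten outputs $\vee_{ex}$, which is not even Boolean-equivalent to the inclusive original. Replacing $\phi$ by its flattened form therefore needs the same inclusive-reading fallback you reserve for user-written $\vee_{ex}$.
\end{itemize}
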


\begin{proof}
  Lemma~\ref{thm:controller_correctness} guarantees that a feasible QP
  execution with $\boldsymbol{\delta}(t) = \mathbf{0}$ enforces the
  fractional power law in Lemma~\ref{lemma:inversion}. Enforcing finite
  level set inversion, and Lemma~\ref{lem:algebraic_under_approx}
  implies that $\hat{\rho} \geq 0$, since $\hat{\rho}$ under-approximates the
  robust semantics, we have $\rho(\phi, x, t) \ge 0$. By
  Theorem~\ref{thm:global_soundness}, this is logically equivalent to
  mission satisfaction $(x, t) \models \phi$. For executions where
  $\boldsymbol{\delta}(t) > \mathbf{0}$, soundness degrades according to
  Lemma~\ref{thm:inclusive_degradation}.
\end{proof}

\begin{corollary}[Conservative Compilation Architecture]
\label{cor:conservative_compiler}
The compiled spatial field $\hat{\rho}$ and downstream differential
constraints form a sound but conservative under-approximation of the
exact quantitative robustness $\rho$. Consequently, satisfying the compiled
control constraints guarantees $(x,t) \models \phi$, but the converse does not
hold (incompleteness).
\end{corollary}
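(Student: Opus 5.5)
The corollary has two parts, and I would treat them separately. Soundness follows by chaining earlier results. Incompleteness needs one concrete counterexample.

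\textbf{Soundness.} I would reuse the chain of Theorem~\ref{thm:global_satisfaction}, now viewed from the perspective of the surrogate.
First, Lemma~\ref{lem:algebraic_under_approx} gives $\hat{\rho} \le \rho$ for every compiled tuple. Set unions turn this into an elementwise bound, and merged disjunctive tuples satisfy it directly. So $\hat{\rho}\ge 0 \Rightarrow \rho \ge 0$.
Second, satisfying the compiled differential constraints with zero slack gives forward invariance of $\{\hat{\rho}_i\ge 0\}$ for safety tuples and finite-time entry into $\{\hat{\rho}_i \ge 0\}$ for liveness tuples, by Lemma~\ref{thm:controller_correctness}. Together these yield $\rho(\phi,x,t)\ge 0$.
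Third, Theorem~\ref{thm:global_soundness} converts this into $(x,t)\models\phi$.
This establishes the implication ``constraints satisfied $\Rightarrow$ $(x,t)\models\phi$''.

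\textbf{Incompleteness.} Conservativeness means the under-approximation can be strict. I would show the converse fails with an explicit counterexample, with two strict-gap sources in mind.
The first source is the smooth minimum: for $\kappa<\infty$ the log-sum-exp conjunction satisfies $\hat{\rho}(\psi_1\wedge\psi_2) \le \min(\rho_1,\rho_2) - \tfrac{1}{\kappa}\ln 2$ whenever $\rho_1=\rho_2$.
The second source is the contraction $\mathcal{I}(c_h,\beta_h,b) > 0$ in every compiled $G$ tuple.
The simplest witness uses the second source. Take $\phi = G^{\alpha}_{ex,[0,b]}(\mu\ge 0)$ with $b>0$, and a trajectory on which $\mu(x_\alpha(\tau)) = \tfrac12\mathcal{I}(c_h,\beta_h,b)$ for all $\tau\in[0,b]$. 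Such a trajectory exists, for instance a stationary state under zero input with a suitable affine $\mu$.
On this trajectory $\rho_{\text{loc}}(G_{[0,b]}\mu,x_\alpha,0) = \tfrac12\mathcal{I} > 0$, so $(x,t)\models\phi$ by Lemma~\ref{lem:level1_soundness} and the event semantics. However, $\hat{\rho} = \tfrac12\mathcal{I} - \mathcal{I} < 0$, so the compiled safety constraint set $\{\hat{\rho}\ge 0\}$ does not contain the trajectory, and the compiled constraints are violated.
An analogous witness built from a conjunction with $\rho_1=\rho_2\in(0,\ln 2/\kappa)$ exposes the softmin gap without any temporal operator.
A third source of incompleteness could be mentioned without further proof: the structural rejections in valid and $\text{valid}_{\text{CNF}}$, such as $G\vee F$ clauses or alternating nestings. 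These discard satisfiable specifications outright.

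\textbf{Main obstacle.} The delicate step is stating precisely what ``satisfying the compiled control constraints'' means for liveness tuples. The $F$ surrogate $\hat{\rho}$ is only required to become nonnegative within time $b$, not at all times. So the soundness chain has to pass through Lemma~\ref{thm:controller_correctness}'s reachability statement rather than through pointwise $\hat{\rho}\ge 0$, and I would restate the implication that way.
For the counterexample I avoid this issue by choosing a pure $G$ witness, where the comparison is pointwise and immediate.
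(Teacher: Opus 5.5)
Your proposal is correct if the paper's earlier lemmas are taken as given, but it is organized differently from the paper's proof. The paper never constructs a witness. It lists three sources of conservatism: log-sum-exp smoothing, the contraction $\mathcal{I}(c_h,\beta_h,b)$, and the CLF rate $c_{V,i}$ calibrated on $V_{0,i}$, whose instantaneous dissipation is only sufficient for reaching within $[0,b]$. It then asserts that their composition keeps $\hat{\rho}\le\rho$, which is its entire soundness argument, and simply declares that ``boundary cases'' with $\hat{\rho}<0\le\rho$ exist. Your $G^{\alpha}_{ex,[0,b]}(\mu\ge 0)$ example with $\mu\equiv\tfrac12\mathcal{I}$ exhibits such a case, $\hat{\rho}=-\tfrac12\mathcal{I}<0<\tfrac12\mathcal{I}=\rho$. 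Your incompleteness half is therefore the more rigorous of the two. The paper's catalogue does buy an account of where conservatism enters, including the differential stage, which a single witness does not need. Your soundness half is the chain of Theorem~\ref{thm:global_satisfaction}. However, routing $F$ tuples through the reachability clause of Lemma~\ref{thm:controller_correctness}, rather than through a pointwise $\hat{\rho}(x(t))\ge 0$, is more careful than the paper is.

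Two points to tighten. First, say explicitly that ``violating the compiled constraints'' means leaving $\{\hat{\rho}\ge 0\}$, which is the paper's own criterion. Under the $\max(\hat{\rho}_i,\epsilon)^{\beta_h}$ CBF row of Section~\ref{sec:glob-temp-contr}, your stationary state with $\mathbf{u}=0$ actually satisfies the differential inequality. A witness that starts on $\{\hat{\rho}=0\}$ and drives $\mu$ down to $\tfrac12\mathcal{I}$ fast enough violates it under either reading. Likewise, a bare conjunction is not a Level~2 formula; the softmin witness works as $F^{\alpha}_{ex,[0,b]}(\mu_1\ge 0\wedge\mu_2\ge 0)$ with $\mu_1=\mu_2\equiv r\in(0,\ln 2/\kappa)$.

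Second, your soundness half is only as strong as Lemma~\ref{lem:algebraic_under_approx}, and ``merged disjunctive tuples satisfy it directly'' is inherited rather than checked. The $\vee_{ex}$ merge ignores the exclusivity term $-\min(\rho_1,\rho_2)$: two $F$ disjuncts with $\mu_1=\mu_2\equiv r>0$ give $\hat{\rho}_{\text{merged}}=r>0>-r=\rho$. Also, invariance of a merged $G$ tuple yields only $G(\psi_1\vee\psi_2)$, not $G\psi_1\vee G\psi_2$. This is the same pointwise-versus-window issue you flagged for $F$. The paper's proof has exactly the same dependence, so this does not separate the two arguments. But neither one actually secures soundness for $\vee_{ex}$ clauses or for disjunctions of $G$ tuples.
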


\begin{proof}
  Conservatism is introduced across both the spatial compilation and
  differential control synthesis stages:
\begin{enumerate}
\item \textit{Smooth Extrema Approximations:} The Log-Sum-Exp (LSE)
  surrogate functions systematically under-approximate exact $\max/\min$
  operations.
\item \textit{Worst-Case Spatial Contraction:} The level-set inversion
  margin $\mathcal{I}(c_h, \beta_h, b)$ over-estimates the spatial buffer required for
  safety operators $G$, strictly shrinking the zero-level set.
\item \textit{Conservative CLF Dissipation Rates:} The decay rate
  $c_{V,i} = \frac{V_{0,i}^{1-\beta_V}}{(1-\beta_V)b}$ enforces continuous
  fractional dissipation $\dot{V}_i \le -c_{V,i} V_i^{\beta_V}$ based on the
  initial activation error $V_{0,i}$. Enforcing continuous instantaneous
  dissipation is a sufficient (conservative) condition for finite-time
  reachability over $[0, b]$.
\end{enumerate}
Because any single one of these structural bounds strictly bounds the
exact satisfaction criteria from below, their composition ensures
$\hat{\rho}(x(t)) \le \rho(\phi, x, t)$. For boundary cases,
$\hat{\rho}(x(t)) < 0 \le \rho(\phi, x, t)$, proving incompleteness.
\end{proof}

\section{Experimental Evaluation}
\label{sec:experimental-results}

\subsection{Experimental Setup}
\label{sec:experimental-setup}

To evaluate the robustness of the proposed framework, we conduct a Monte
Carlo benchmark, on the case-study setup (Section~\ref{sec:case-study}),
comparing the proposed timeless geometric controller synthesis against
two well established time-indexed baselines: Time-Varying Control
Barrier Functions (TV-CBF)~\cite{lindemann2018control} and
STL-MPC~\cite{raman2014model}. The evaluation consisted of one nominal
execution and 5 Monte Carlo runs for each of four distinct clock anomaly
categories: offset, drift, snap, and freeze.

To provide a fair comparison and prevent solver infeasibility under
temporal anomalies (as observed in Figure~\ref{fig:motivating_example}),
goal-reaching slack variables were systematically incorporated into both
baseline controllers, reflecting standard relaxation practices in the
control and robotics
literature~\cite{ames2016control,sadraddini2015robust}. Specifically,
the TV-CBF baseline was augmented with a non-negative goal-reaching
slack variable ($\delta \ge 0$) and an $L_2$ norm guidance penalty to prevent
asymptotic convergence stalling. Similarly, the STL-MPC controller was
relaxed using soft goal reachability slack variables ($s_k \ge 0$ and
$s_{\text{reach}} \ge 0$), a terminal $L_2$ distance penalty, and extended
time windows to ensure the prediction horizon remained mathematically
valid post-deadline. For both baselines, safety constraints governing
collision avoidance with static obstacles and dynamic hazards remained
strictly hard. We used the Gurobi solver for solving the optimisation
problems in all cases. All experiments were carried out on Linux 6.17 on
Intel Core i7-14700 CPU bound to a single core out of 28 SMT cores and
31 GB of RAM.

The asynchronous clock anomalies were injected into the system's global
clock by sampling from the following probability distribution functions
(PDFs):
\begin{itemize}
\item \textbf{Clock Offset:} A constant time shift was added to the
  simulation clock, sampled from a uniform distribution $\mathcal{U}(-1.5, 1.5)$.
\item \textbf{Clock Drift:} The simulation time was scaled by a
  continuous drift factor sampled from $\mathcal{U}(0.8, 1.2)$.
\item \textbf{Clock Snap:} Three discrete, instantaneous time jumps were
  injected randomly during the mission. The specific time steps for
  these snaps were sampled uniformly without replacement from the
  integer range $[20, 320]$, and their respective jump magnitudes were
  sampled from $\mathcal{U}(-1.0, 1.0)$.
\item \textbf{Clock Freeze:} The global clock was temporarily suspended,
  freezing the system's perception of time. The freeze onset step was
  sampled from a discrete uniform distribution $\mathcal{U}(40, 240)$, and the
  total freeze duration was sampled from $\mathcal{U}(20, 50)$.
\end{itemize}

\subsection{Results and Discussion}
\label{sec:results-discussion}

The results of the Monte Carlo simulations (conducted across 5 runs per
scenario) demonstrate that time-indexed controllers remain fundamentally
brittle to asynchronous anomalies, even when relaxed with slack
variables. Figures~\ref{fig:res_nominal} through~\ref{fig:res_drift}
illustrate the task progress (distance to goal) and safety margins for
the evaluated methods under both nominal and anomalous conditions.

\begin{figure}[htbp]
  \centering
  \includegraphics[width=\columnwidth]{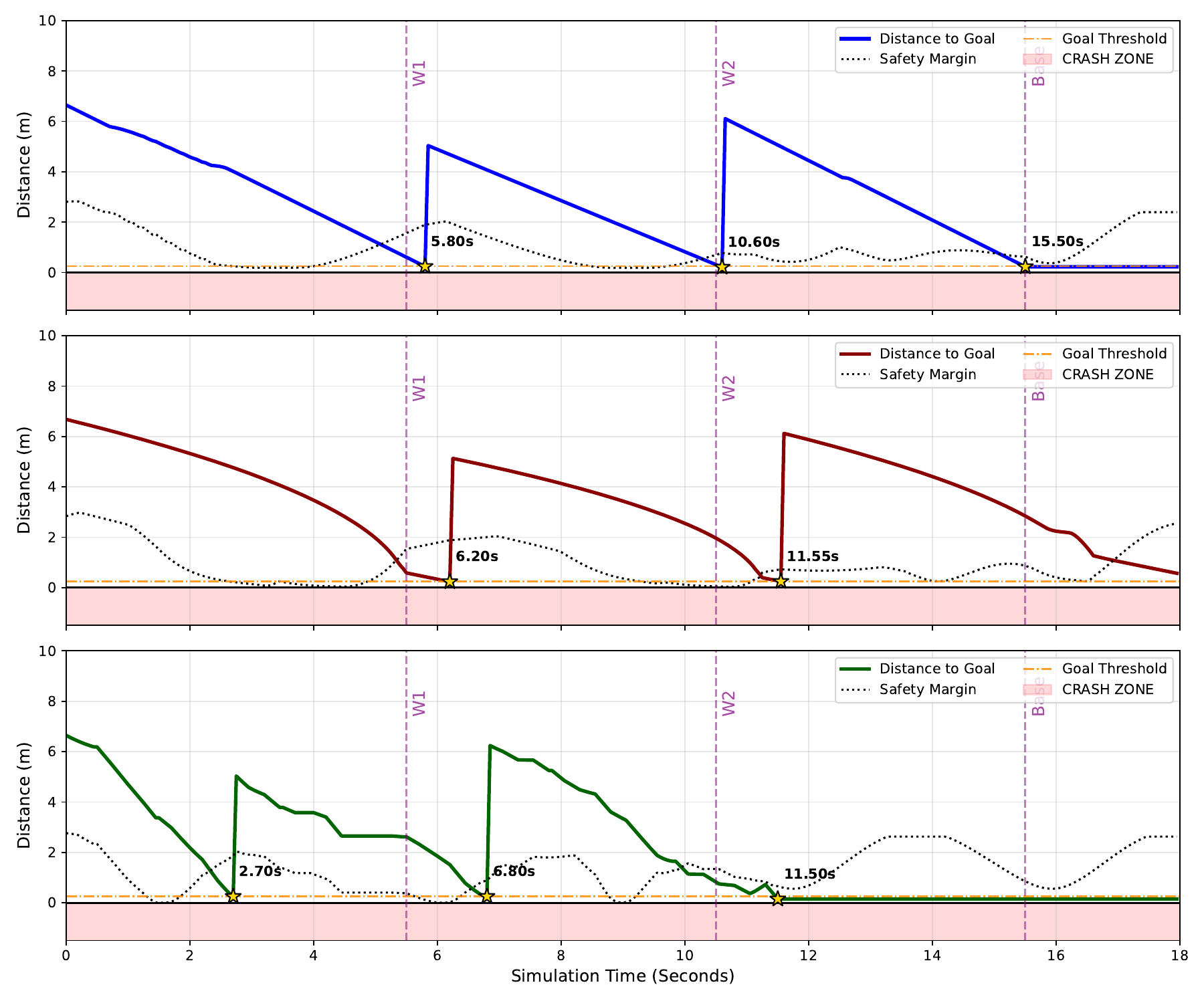}
  \caption{Task progress vs. safety under nominal timing conditions (1
    run). Distance below Goal (yellow dashed) threshold is considered
    meeting the liveness (reachability) goal.}
  \label{fig:res_nominal}
\end{figure}

Under nominal timing conditions, all three architectures successfully
navigate the environment while maintaining a positive safety margin,
keeping the agent out of the crash zone. The proposed geometric
controller, and the TV-CBF baseline both apply soft liveness penality
weights and hence miss the deadline when reaching goals --- as expected
from the weSTL+ semantics (Equation~\eqref{eq:1}). The maximum observed
liveness slack ($\delta_{F}$) of 8.48 is below the maximum theoretical
degradation bound 12.09, thereby validating
Lemma~\ref{thm:inclusive_degradation} for the proposed approach. While
the liveness deadline is violated, both the proposed and TV-CBF maintain
absolute safety. The STL-MPC technique applies maximum actuator velocity
to complete all three reachability goals ahead of time.

\begin{figure}[htbp]
  \centering
  \includegraphics[width=\columnwidth]{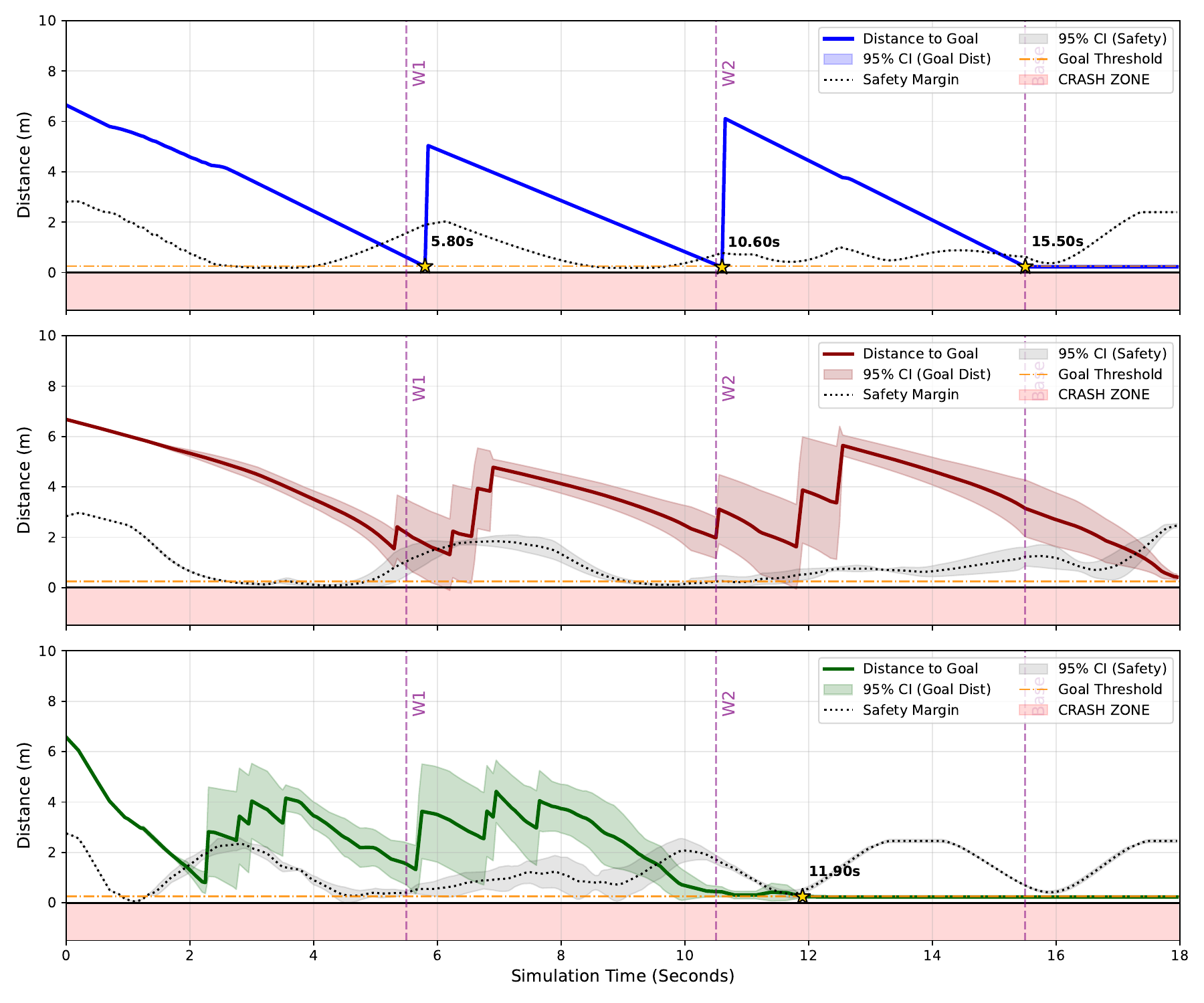}
  \caption{Task progress vs. safety under Snap anomalies (5 runs).
    TV-CBF and STL-MPC are unable to meet goals W1 and W2.}
  \label{fig:res_snap}
\end{figure}

\begin{figure}[htbp]
  \centering
  \includegraphics[width=\columnwidth]{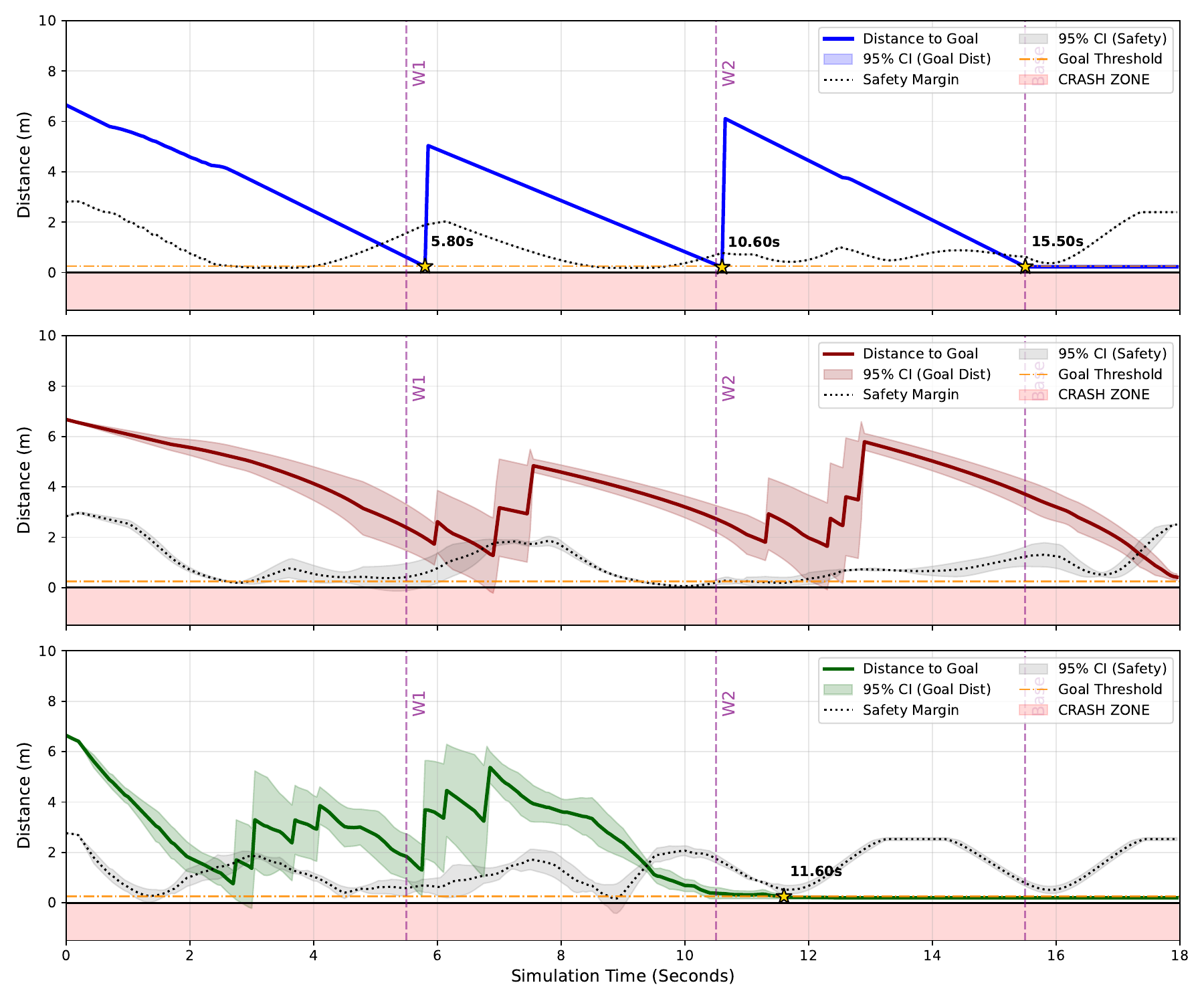}
  \caption{Task progress vs. safety under Offset anomalies (5 runs).
    W1 and W2 reach goals not met by TV-CBF and STL-MPC.}
  \label{fig:res_offset}
\end{figure}

However, during the snap, offset, freeze, and drift experiments, both
the TV-CBF and STL-MPC baselines exhibit severe performance degradation.
Despite the inclusion of the slack variables, the TV-CBF controller's
task progress becomes highly erratic. Across three out of four temporal
anomaly types, the TV-CBF baseline demonstrates significant variance in
its distance to the goal and consistently fails to meet any reachability
goals, though its safety margin generally manages to stay above the
crash zone.

\begin{figure}[htbp]
  \centering
  \includegraphics[width=\columnwidth]{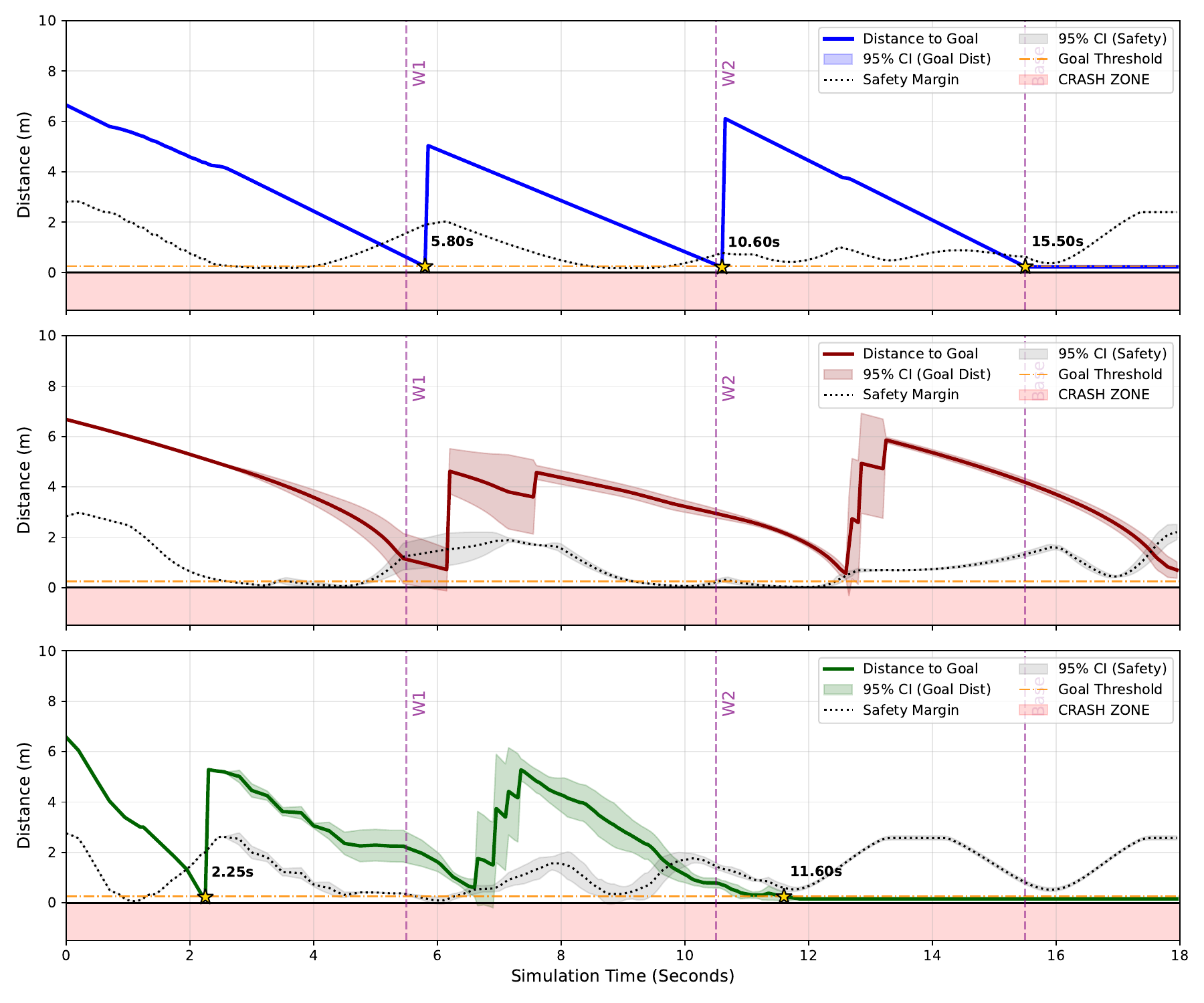}
  \caption{Task progress vs. safety under Freeze anomalies (5 runs).
    Goal not met by TV-CBF. Potential crash in STL-MPC.}
  \label{fig:res_freeze}
\end{figure}

\begin{figure}[htbp]
  \centering
  \includegraphics[width=\columnwidth]{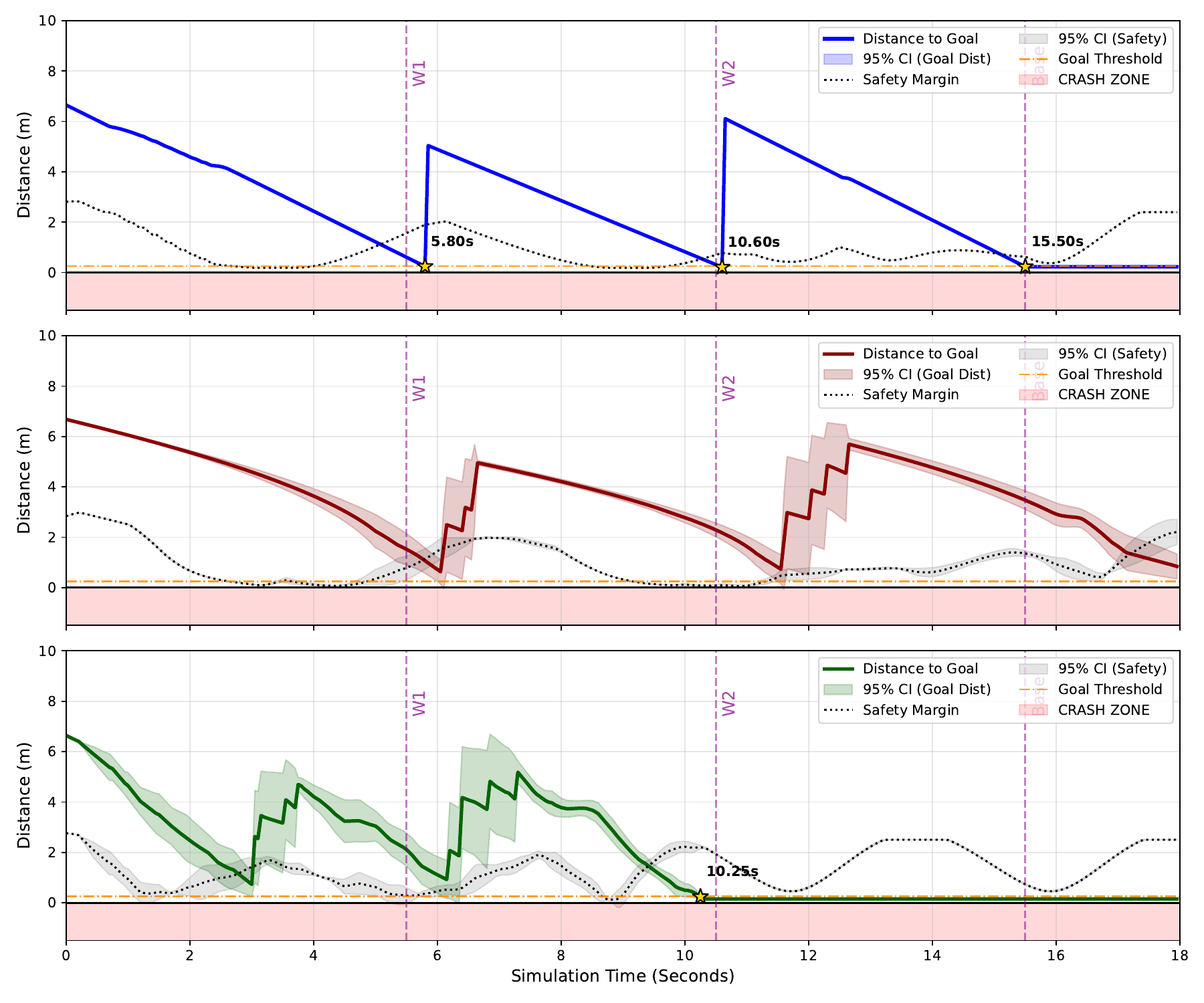}
  \caption{Task progress vs. safety under Drift anomalies (5 runs). Goals
    W1 and W2 not met by STL-MPC with potential for crash.}
  \label{fig:res_drift}
\end{figure}

The STL-MPC baseline also suffers from the same issues as TV-CBF.
Moreover, we also see safety failures. While the reachability slacks
prevent immediate solver crashes, the temporal desynchronization forces
the vehicle into unsafe states. In the snap, offset, freeze, and drift
scenarios, the 95\% confidence interval for the STL-MPC safety margin
drops into the crash zone (safety margin $\leq 0$), indicating recurring
collisions across the runs. Furthermore, its ability to reach targeted
waypoints is also compromised.

In stark contrast, the proposed geometric controller approach succeeds
flawlessly across all anomalous trials. Because the framework converts
global temporal requirements into finite-time level-set inversions
managed entirely through geometric spatial boundaries, its performance
is completely invariant to the simulated clock anomalies. The distance
to the goal and safety margins for remain identical to the nominal
execution, organically adapting to severe clock snaps, offsets, freezes,
and drifts without violating safety bounds or missing deadlines.

\subsection{Execution Time}
\label{sec:execution-time}

To evaluate computational efficiency, we compare the overall median
solve times per control iteration across all evaluated scenarios and
Monte Carlo runs using the Gurobi solver. As summarized in
Table~\ref{tab:execution_time}, the proposed geometric controller
framework achieves an overall median solve time of $3.90\text{ ms}$,
operating well within low-latency bounds for real-time control
execution. In comparison, the TV-CBF baseline requires a median solve
time of $7.39\text{ ms}$. The STL-MPC formulation incurs a significantly
higher computational cost with a median solve time of $492.04\text{ ms}$
per iteration, illustrating the optimization overhead inherent to
mixed-integer encodings over discrete temporal horizons.

\begin{table}[htbp]
  \centering
  \caption{Overall Median Solve Times Across All Scenarios and Runs}
  \label{tab:execution_time}
  \begin{tabular}{|l|c|}
    \hline
    \textbf{Controller / Formulation} & \textbf{Median Solve Time (ms)} \\ \hline
    Proposed Geometric Control  & 3.90 \\ \hline
    TV-CBF  & 7.39 \\ \hline
    STL-MPC  & 492.04 \\ \hline
  \end{tabular}
\end{table}

\section{Related Work}
\label{sec:related_work}

The literature surrounding formal specifications and real-time
controller synthesis for Cyber-Physical Systems (CPS) has evolved to
address various functional gaps, primarily divided into specification
expressiveness and robust control execution.

\subsection{Comparison of Temporal Logics}
\label{sec:comp-temp-logics}

Formalizing complex behaviors in highly dynamic environments
traditionally relies on Signal Temporal Logic
(STL)~\cite{maler2004monitoring}. Standard STL provides strict Boolean
satisfaction and continuous-time robustness metrics but is inherently
bound to rigid global runtime clocks~\cite{maler2004monitoring}. This
structural rigidity renders it insufficient for practical CPS
deployments subject to asynchronous timing anomalies or flexible task
trade-offs.

To address these limitations, recent extensions have branched in two
primary directions. Weighted Signal Temporal Logic (wSTL) and
wSTL+~\cite{mehdipour2021weighted, cardona2023preferences} introduce a
quantitative partial-satisfaction hierarchy, enabling the formulation of
soft specifications scaled by user-defined preference weights. However,
these frameworks still depend heavily on synchronized global clocks.
Conversely, Event-Based Signal Temporal Logic
(eSTL)~\cite{gundana2021event} abandons the rigid global timeline in
favor of asynchronous, event-anchored evaluation, triggering local
property evaluations based on geometric boundaries rather than
time~\cite{gundana2021event}.

The proposed logic in this work, weSTL+, fundamentally bridges these
paradigms. By merging the asynchronous event anchors of eSTL with the
quantitative preference scaling of wSTL+, weSTL+ enables autonomous
agents to user preferred safety and liveness objectives. As demonstrated
in our case study, this allows an autonomous agent to safely react to
unexpected packet delays, clock anomalies, and nested loitering without
losing formal soundness.

\subsection{Comparison of Controller Synthesis}
\label{sec:comp-contr-synth}

While advanced temporal logics effectively specify CPS behaviors, their
downstream control architectures often suffer from systemic fragilities
when deployed in real-world scenarios featuring clock skew, network
delays, clock jitter, and steps/snaps.

Standard time-indexed frameworks, such as Hard
STL-MPC~\cite{raman2014model} and Global Time-Varying Control Barrier
Functions (TV-CBF)~\cite{lindemann2018control}
(Figure~\ref{fig:motivating_example}), rely on explicit local or global
clock variables. As shown in the case study, exposing these controllers
to macroscopic temporal anomalies decouples the digital timeline from
physical state evolution, leading to immediate Quadratic Program (QP)
solver infeasibility or permanent loss of liveness. Relaxed time-indexed
variants, such as Soft Local TV-CBF and
wSTL+-MPC~\cite{cardona2023preferences}, attempt to mitigate these
optimization crashes via soft slack variables (as done in
Section~\ref{sec:experimental-results}). However, under clock anomalies,
these slack buffers are rapidly consumed, causing outright boundary
penetrations into hazards or are unable to meet liveness/reachability
requirements. Dual-layer systems like G\&K-G~\cite{gundana2021event}
gracefully degrade by instantly aborting liveness to maintain safety,
but ultimately fail mission progress.

To circumvent these theoretical and practical vulnerabilities, this
paper introduces a fundamentally timeless geometric controller. By
mapping temporal windows directly into continuous-time geometric
inclusions via finite-time level-set inversion, explicit runtime clock
monitoring is eliminated entirely.

\section{Conclusions}
\label{sec:conclusions}

This paper presented a paradigm shift to geometric control to overcome
the systemic vulnerabilities of traditional, time-indexed Cyber-Physical
Systems (CPS) controllers. By introducing Weighted Event-Based Signal
Temporal Logic (weSTL+), we successfully bridged the gap between
asynchronous event-anchored execution and quantitative preference
scaling. Furthermore, our \textit{sound} two-pass compiler translates
these complex temporal specifications directly into strictly continuous,
$C^1$-differentiable geometric surrogate constraints via finite-time
level-set inversion. As demonstrated in our autonomous robotics case,
and ablation, studies, this approach strictly guarantees the enforcement
of both safety and liveness under severe macroscopic timing anomalies.
Ultimately, shifting from rigid digital timelines to physical spatial
boundaries provides a mathematically sound and highly robust foundation
for resilient real-world CPS deployments.

\bibliographystyle{IEEEtran}
\bibliography{ref.bib}

@inproceedings{donze2010robust,
  title={Robust satisfaction of temporal logic over real-valued signals},
  author={Donz{\'e}, Alexandre and Maler, Oded},
  booktitle={International conference on formal modeling and analysis of timed systems},
  pages={92--106},
  year={2010},
  organization={Springer}
}

@inproceedings{maler2004monitoring,
  title={Monitoring temporal properties of continuous signals},
  author={Maler, Oded and Nickovic, Dejan},
  booktitle={International symposium on formal techniques in real-time and fault-tolerant systems},
  pages={152--166},
  year={2004},
  organization={Springer}
}

@article{gundana2021event,
  title={Event-based signal temporal logic synthesis for single and multi-robot tasks},
  author={Gundana, David and Kress-Gazit, Hadas},
  journal={IEEE Robotics and Automation Letters},
  volume={6},
  number={2},
  pages={3687--3694},
  year={2021},
  publisher={IEEE}
}

@inproceedings{cardona2023preferences,
  title={Preferences on partial satisfaction using weighted signal temporal logic specifications},
  author={Cardona, Gustavo A and Vasile, Cristian-Ioan},
  booktitle={2023 European Control Conference (ECC)},
  pages={1--6},
  year={2023},
  organization={IEEE}
}

@article{mehdipour2021weighted,
  title={Weighted signal temporal logic},
  author={Mehdipour, Noushin and Belta, Calin},
  journal={IEEE Control Systems Letters},
  volume={6},
  pages={1016--1021},
  year={2021},
  publisher={IEEE}
}

@inproceedings{raman2014model,
  title={Model predictive control with signal temporal logic specifications},
  author={Raman, Vasumathi and Donz{\'e}, Alexandre and Maasoumy, Mehdi and Murray, Richard M and Sangiovanni-Vincentelli, Alberto and Seshia, Sanjit A},
  booktitle={53rd IEEE Conference on Decision and Control},
  pages={81--87},
  year={2014},
  organization={IEEE}
}

@article{lindemann2018control,
  title={Control barrier functions for signal temporal logic tasks},
  author={Lindemann, Lars and Dimarogonas, Dimos V},
  journal={IEEE Control Systems Letters},
  volume={3},
  number={1},
  pages={96--101},
  year={2018},
  publisher={IEEE}
}

@inproceedings{lee2008cyber,
  title={Cyber physical systems: Design challenges},
  author={Lee, Edward A},
  booktitle={2008 11th IEEE International Symposium on Object and Component-Oriented Real-Time Distributed Computing (ISORC)},
  pages={363--369},
  year={2008},
  organization={IEEE}
}

@misc{mills2010ntp,
    series = {Request for Comments},
    number = 5905,
    howpublished = {RFC 5905},
    publisher = {RFC Editor},
    doi = {10.17487/RFC5905},
    url = {https://www.rfc-editor.org/info/rfc5905},
    author = {David L. Mills and Jim Martin and Jack Burbank and William Kasch},
    title = {{Network Time Protocol Version 4: Protocol and Algorithms Specification}},
    year = 2010,
    month = jun
}

@article{brunelli2012temperature,
    title = {Temperature compensated time synchronisation in wireless sensor networks},
    author = {Brunelli, D. and Balsamo, D. and Paci, G. and Benini, L.},
    journal = {Electronics Letters},
    volume = {48},
    number = {16},
    year = {2012},
    publisher = {IET},
    doi = {10.1049/el.2012.1246}
}

@inproceedings{cardoso2011network,
    title = {Network latency and packet delay variation in cyber-physical systems},
    author = {Cardoso, J. and Derler, P. and Eidson, J. C. and Lee, E. A.},
    booktitle = {2011 IEEE Network Science Workshop},
    pages = {51--58},
    year = {2011},
    organization = {IEEE}
}

@inproceedings{lisova2017monitoring,
  title={Monitoring of clock synchronization in cyber-physical systems: A sensitivity analysis},
  author={Lisova, Elena and Uhlemann, Elisabeth and {\AA}kerberg, Johan and Bj{\"o}rkman, Mats},
  booktitle={2017 international conference on Internet of Things, embedded systems and communications (IINTEC)},
  pages={134--139},
  year={2017},
  organization={IEEE}
}

@article{ames2016control,
  title={Control barrier function based quadratic programs for safety critical systems},
  author={Ames, Aaron D and Xu, Xiangru and Grizzle, Jessy W and Tabuada, Paulo},
  journal={IEEE transactions on automatic control},
  volume={62},
  number={8},
  pages={3861--3876},
  year={2016},
  publisher={IEEE}
}

@article{bhat2000finite,
  title={Finite-time stability of continuous autonomous systems},
  author={Bhat, Sanjay P and Bernstein, Dennis S},
  journal={SIAM Journal on Control and optimization},
  volume={38},
  number={3},
  pages={751--766},
  year={2000},
  publisher={SIAM}
}

@article{yin2024formal,
  title={Formal synthesis of controllers for safety-critical autonomous systems: Developments and challenges},
  author={Yin, Xiang and Gao, Bingzhao and Yu, Xiao},
  journal={Annual Reviews in Control},
  volume={57},
  pages={100940},
  year={2024},
  publisher={Elsevier}
}

@inproceedings{sadraddini2015robust,
  title={Robust temporal logic model predictive control},
  author={Sadraddini, Sadra and Belta, Calin},
  booktitle={2015 53rd Annual Allerton Conference on Communication, Control, and Computing (Allerton)},
  pages={772--779},
  year={2015},
  organization={IEEE}
}

@inproceedings{guo2012realtime,
  title={Real-time Clock Jump Detection and Repair for Precise Point Positioning},
  author={Guo, Fei and Zhang, Xiaohong},
  booktitle={Proceedings of the 25th International Technical Meeting of the Satellite Division of the Institute of Navigation (ION GNSS 2012)},
  pages={3077--3088},
  year={2012},
  organization={Institute of Navigation}
}

@techreport{ieee1588_2008,
  title={IEEE Standard for a Precision Clock Synchronization Protocol for Networked Measurement and Control Systems (Clause 11)},
  author={{IEEE}},
  institution={IEEE Standards Association},
  number={1588-2008},
  year={2008},
  note={Focus on Clause 11 regarding clock jumps and fault handling}
}

@inproceedings{nist2018improving,
  title={Improving packet synchronization in an NTP server},
  author={Novick, Andrew N and Lombardi, Michael A and Franzen, Kevin and Clark, John},
  booktitle={Proceedings of the 49th Annual Precise Time and Time Interval Systems and Applications Meeting},
  pages={256--260},
  year={2018}
}

@article{liu2020automatic,
  title={Automatic detection of ionospheric scintillation-like GNSS satellite oscillator anomaly using a machine-learning algorithm},
  author={Liu, Yunxiang and Morton, Y. Jade},
  journal={NAVIGATION, Journal of the Institute of Navigation},
  volume={67},
  pages={651--662},
  year={2020},
  publisher={Institute of Navigation},
  doi={10.1002/navi.385}
}

@manual{nxp2025ieee1588,
  title={IEEE 1588 basic overview},
  author={{NXP Semiconductors}},
  year={2025},
  url={https://docs.nxp.com/bundle/AN12149/page/topics/ieee_1588_basic_overview.html}
}

\end{document}